\newtheorem{assumption}{Assumption}
\newcolumntype{C}[1]{>{\centering\arraybackslash}p{#1}}
\begin{document}
\title{FlexProofs: A Vector Commitment with Flexible Linear Time for Computing All Proofs}
%
%\titlerunning{Abbreviated paper title}
% If the paper title is too long for the running head, you can set
% an abbreviated paper title here
%

\author{Jing Liu \and
	Liang Feng Zhang\textsuperscript{(\Letter)}
}
\authorrunning{J. Liu and L. F. Zhang}

\institute{ShanghaiTech University, Shanghai, China \\ \email{\{liujing1,zhanglf\}@shanghaitech.edu.cn} }

%\author{First Author\inst{1}\orcidID{0000-1111-2222-3333} \and
%Second Author\inst{2,3}\orcidID{1111-2222-3333-4444} %\and
%Third Author\inst{3}\orcidID{2222--3333-4444-5555}}
%
%\authorrunning{F. Author et al.}
% First names are abbreviated in the running head.
% If there are more than two authors, 'et al.' is used.
%
%\institute{Princeton University, Princeton NJ 08544, USA \and
%Springer Heidelberg, Tiergartenstr. 17, 69121 Heidelberg, Germany
%\email{lncs@springer.com}\\
%\url{http://www.springer.com/gp/computer-science/lncs} \and
%ABC Institute, Rupert-Karls-University Heidelberg, Heidelberg, Germany\\
%\email{\{abc,lncs\}@uni-heidelberg.de}}
%
\maketitle              % typeset the header of the contribution
\begin{abstract}
%The abstract should briefly summarize the contents of the paper in 150--250 words.

In this paper, we introduce FlexProofs, a new {\em vector commitment (VC)} scheme that achieves two key properties: (1) the prover can generate all individual opening proofs for a vector of size $N$ in optimal time ${\cal O}(N)$, and there is a flexible batch size parameter $b$ that can be increased to further reduce the time to generate all proofs; and (2) the scheme is directly compatible with a family of zkSNARKs that encode their input as a multi-linear polynomial.
As a critical building block, we propose the first {\em functional commitment (FC)} scheme for multi-exponentiations with batch opening. 
Compared with HydraProofs, the only existing VC scheme that computes all proofs in optimal time ${\cal O}(N)$ and is directly compatible with zkSNARKs, FlexProofs may speed up the process of
generating  all proofs, if the parameter $b$ is properly chosen. Our experiments show that for $N=2^{16}$ and $b=\log^2 N$, FlexProofs can be  $6\times$ faster than HydraProofs.
Moreover, when combined with suitable zkSNARKs, FlexProofs enable practical applications such as verifiable secret sharing and verifiable robust aggregation.

%A key feature of our VC scheme is the tunable batch size $b$: increasing $b$ reduces the time to generate all proofs at the cost of larger proofs and longer verification time. 

\keywords{Vector Commitment  \and Functional Commitment \and zkSNARKs.}
\end{abstract}
\section{Introduction}
Zero-knowledge succinct non-interactive arguments of knowledge (zkSNARKs) \cite{K92,G16} allow a prover to convince a verifier with input ${\bf m}$  that ${\bf y}$ is the output of a computation ${\cal C}({\bf m}, {\bf w})$, where $\cal C$ is typically represented as  an arithmetic circuit and $\bf w$ may be a private input held by the prover.
While zkSNARKs have been widely used to secure real-world applications \cite{LXZ+24,SCG+14,LXZ21,APK+24,GAZ+22}
and  are well known for enabling efficient verification  by end-users, it is less clear how to apply them in a {\em multi-user} setting  where $N$ participants $\{{\cal V}_i\}_{i=0}^{N-1}$, each holding a private input $m_i$, 
delegate the computation  of ${\cal C}((m_0,\ldots,m_{N-1}),{\bf w})$ to  a service provider.
Such multi-user settings underlie numerous applications such as distributed machine learning \cite{MMR+17}, crowdsourcing \cite{H+06}, secret sharing \cite{S79}, and collaborative filtering \cite{SFH+07}, where it may be crucial to address the risk of a dishonest service provider  \cite{LYW+23,F87}.
The traditional zkSNARKs are not directly suitable for this setting because they assume each
verifier knows the entire input ${\bf m}=(m_0,\ldots,m_{N-1})$, which however is not true in the multi-user setting.  

Recently, Pappas,  Papadopoulos, and Papamanthou \cite{PPP25} proposed a method of adapting zkSNARKs to the multi-user setting  by combining them with {\em vector commitment (VC)} \cite{CF13}.
In their solution, the prover  uses a VC scheme to commit to the input vector ${\bf m}=(m_0,\ldots,m_{N-1})$ and runs a zkSNARK for ${\cal C}({\bf m},{\bf w})$ (similar to commit-and-prove zkSNARKs \cite{FFG+16,CFQ19}). 
Since VCs support openings of the vector at chosen indices, the prover can additionally provide each verifier ${\cal V}_i$ with a proof that its input $m_i$ is indeed the $i$-th element of the committed vector used in the zkSNARK, ensuring both computation correctness and individual data inclusion. 
This scenario requires a VC scheme that  can {\em efficiently generate all proofs} and  is 
{\em directly compatible with zkSNARKs}.

Among the existing VC schemes, the Merkle trees \cite{M87} support generating all proofs in ${\cal O}(N)$ time, but are inefficient when integrated into zkSNARKs.
Some modern VC schemes, e.g., those based on elliptic curves  \cite{SCP+22,WUP23,TAB+20,GRW+20,LZ22} or error-correcting codes \cite{ZXH+22,ZLG+24}, are directly compatible with many zkSNARKs  \cite{XZZ+19,S20,GWC19}, yet require ${\cal O}(N^2)$ time to naively produce all proofs.
Some constructions \cite{SCP+22,TAB+20,LZ22,ZXH+22,ZLG+24} may reduce the time cost to ${\cal O}(N\log N)$.
To the best of our knowledge, HydraProofs \cite{PPP25} is the only existing VC scheme that can generate all opening proofs in ${\cal O}(N)$ time and is directly compatible with zkSNARKs based on multi-linear polynomials \cite{XZZ+19,S20,CBB+23}.
In particular, the ${\cal O}(N)$ time cost is incurred by ${\cal O}(N)$ field operations and ${\cal O}(N)$   cryptographic operations (e.g., group exponentiations or pairings). 
In this paper, we propose FlexProofs, a new VC scheme that preserves zkSNARK compatibility and generates all proofs using only ${\cal O}(N)$ field operations and ${\cal O}(N/b + \sqrt{N}\log N)$ heavy cryptographic operations, where $b$ is the batch size (ranging from 1 to $\sqrt{N}$). Compared with HydraProofs, FlexProofs requires fewer heavy cryptographic operations for larger $b$, enabling more efficient generation of all proofs.

\subsection{Our Contributions}
\vspace{2mm}
\noindent
{\bf Functional Commitment Scheme for Multi-Exponentiations with Batch Opening.}
To construct FlexProofs, we propose the first {\em functional commitment (FC)} scheme for multi-exponentiations with batch opening. 
In the scheme, the prover commits to a vector and later generates a batch proof for multiple computations over the vector; any verifier can check computation results against the commitment. 
The scheme achieves constant-size commitments and logarithmic-size proofs, and is proven secure in the algebraic group model (AGM) and the random oracle model (ROM) under the $n$-ASDBP and $q$-SDH assumptions; our implementation confirms the high efficiency of batch opening.

\vspace{2mm}
\noindent
{\bf FlexProofs.}
We introduce FlexProofs, a new {\em vector commitment (VC)} scheme that achieves two key properties. First, the prover can generate all proofs for a vector of size $N$ in optimal time ${\cal O}(N)$, which is incurred by ${\cal O}(N)$ field operations and ${\cal O}(N/b + \sqrt{N}\log N)$ cryptographic operations, where batch size $b$ ranges from $1$ to $\sqrt{N}$.
Second, the scheme is directly compatible with a family of zkSNARKs that encode their input as a multi-linear polynomial \cite{XZZ+19,S20,CBB+23}.
FlexProofs builds on the proposed FC scheme and an existing {\color{black}polynomial commitment (PC)} scheme, and is proven correct and secure {\color{black}(see below for the detailed design rationale)}.
Compared with HydraProofs, the only existing VC scheme that computes all proofs in time ${\cal O}(N)$ (incurred by ${\cal O}(N)$ field operations and ${\cal O}(N)$ cryptographic operations) and is directly compatible with zkSNARKs, FlexProofs requires fewer cryptographic operations for larger $b$, enabling more efficient proof generation.
Our experiments show that for $N=2^{16}$ and $b=\log^2 N$, FlexProofs generates all proofs about $6\times$ faster than HydraProofs, while achieving similar verification time and slightly larger proofs.
Finally, combining our VC scheme with suitable zkSNARKs enables practical applications such as verifiable secret sharing and verifiable robust aggregation.

%A key feature of our design is the tunable batch size $b$: increasing $b$ reduces the number of cryptographic operations (such as group exponentiations or pairings) in {\sf OpenAll}, thereby improving its efficiency, at the cost of larger proof size and longer verification time.

{\color{black}
\vspace{2mm}
\noindent
{\bf Design Rationale for FlexProofs.}
FlexProofs adopts a clear two-layer structure: each sub-vector is first committed using an existing
PC scheme, and the resulting PC commitments are then committed with our proposed FC scheme. 
Correspondingly, the proof generation process also naturally splits into an FC layer and a PC 
layer, which allow us to do optimization independently: the FC layer employs our batch-opening 
FC scheme to lower the cost, while the PC layer applies ideas similar to HydraProofs to further 
reduce the cost. Overall, the two-layer structure combining FC and PC, together with our batch-opening FC scheme, enables FlexProofs to outperform purely PC-based HydraProofs.}
\subsection{Applications}
\vspace{2mm}
\noindent
{\bf Verifiable Secret Sharing.}
A real-world application involving multiple users' data is {\em verifiable secret sharing (VSS)}. 
At a high level, a secret sharing scheme \cite{S79} allows a dealer to split a secret value $s$ into $N$ shares such that any $t{+}1$ of them can reconstruct $s$, while any subset of at most $t$ reveals nothing about it. 
Verifiability protects the receivers of shares against a dishonest dealer who may issue malformed or inconsistent shares \cite{F87,CGM+85}. 
The combination of VCs and zkSNARKs yields a VSS scheme where each receiver can verify that all shares form a valid sharing of $s$ and that its own share is consistent with this sharing.

\vspace{2mm}
\noindent
{\bf Verifiable Robust Aggregation.}
{\color{black}
Another application is {\em federated learning (FL)} \cite{BBL11,MMR+17,KMY+16}, where multiple clients train local models and send gradients to an {\em aggregator}, who combines them into a global model. This process is repeated iteratively until the model converges.
Due to its decentralized nature, FL is vulnerable to misbehaving clients that submit poisoned or low-quality gradients, motivating a long line of work on {\em robust aggregation} to mitigate the impact of such adversarial inputs \cite{BNL12,KT22,CFL+20,MMM+22,FYB18}.
However, the security of robust aggregation relies on an honest aggregator, and these guarantees collapse when the aggregator is untrusted.
}
%Due to its decentralized nature, FL is vulnerable to misbehaving clients that may submit poisoned or low-quality gradients to degrade the model \cite{BNL12,KT22}. %48,49,50
%To address this, a long line of research has focused on {\em robust aggregation}, which aims to mitigate the impact of such adversarial inputs \cite{CFL+20,MMM+22,FYB18}. %51,52,53,54,55
%While robust aggregation mitigates the impact of malicious client updates, its guarantees rely on the aggregator honestly executing the procedure. If the aggregator itself is untrusted, these guarantees collapse. 
In practice, aggregators often have strong incentives to misuse their power for personal benefit.
For instance, in federated recommendation systems \cite{YTZ+20,TLZ+20}, %56,57,58
an aggregator can tamper with the model to promote its products, influence markets, or push its political agendas \cite{CNBC18,Guardian21,HGP+19}; %60,61,62,63
in FL-as-a-Service \cite{KKP20}, it may delay convergence for monetary gain \cite{XLL+19}. %65,66,67
To safeguard FL against both a misbehaving aggregator and malicious clients, Pappas et al.~\cite{PPP25} introduce the notion of {\em verifiable robust aggregation (VRA)}.
And by applying the combination of VCs and zkSNARKs to existing robust aggregation algorithms such as FLTrust \cite{CFL+20}, one can obtain such a VRA scheme.

\subsection{Related Work}
\vspace{2mm}
\noindent
{\bf Functional Commitments.}
Existing FC schemes target polynomials \cite{CFT22,WW23}, linear functions \cite{LRY16,LM19,LFG21,CNR+22,CFK+22}, monotone span programs \cite{CFT22}, or semi-sparse polynomials \cite{LP20}.  
{\color{black}
Prior FC schemes for Boolean or arithmetic circuits cannot handle multi-exponentiation
$\prod_{i=0}^{n-1}{\bf A}[i]^{{\bf b}[i]}({\bf A}\in \mathbb{G}_1^n, {\bf b}\in \mathbb{F}_p^n)$, where $\mathbb{G}_1$ is the source group of a bilinear group. Some schemes fail because they cannot commit to vectors over $\mathbb{G}_1$: for instance, \cite{PPS21,CP22} commit to vectors over $\mathbb{Z}_q$; \cite{BCF+22} commits to vectors over a commutative ring; \cite{WW24} commits to vectors over finite rings; and \cite{BNO21} commits to arithmetic circuits in a field. Other schemes fail for a different reason: FC schemes \cite{WW23,WW23+,SLH24} targeting Boolean functions $f:\{0,1\}^l\rightarrow \{0,1\}$ operate over bitstrings and fixed Boolean predicates, which cannot represent or efficiently compute exponentiations over $\mathbb{G}_1$.	
}

%For the multi-exponentiation $\prod_{i=0}^{n-1}{\bf A}[i]^{{\bf b}[i]}$, representing it as a Boolean or arithmetic circuit yields linear size (and depth ${\cal O}(\log n)$ for Boolean circuits), but existing FC schemes only support circuits of bounded size, depth, or width \cite{PPS21,CP22,WW23,WW23+,SLH24,BCF+22,BNO21} and thus cannot handle large $n$.  
%Moreover, the FC scheme \cite{WW24} operates over $\mathbb{Z}_p$ and cannot handle multi-exponentiation over {\color{blue} the source group $\mathbb{G}_1$ of a bilinear group}.

\vspace{2mm}
\noindent
{\bf Vector Commitments.}
Merkle trees \cite{M87} support generating all proofs in ${\cal O}(N)$ time.
However, they are not well suited for zkSNARKs, since incorporating this VC pre-image into existing zkSNARKs would essentially require reconstructing the entire tree within the zkSNARK arithmetic circuit, which is highly inefficient even with SNARK-friendly hash functions \cite{GKR+21}.
Some modern VC schemes, such as those based on elliptic curves \cite{SCP+22,WUP23,TAB+20,GRW+20,LZ22} or error-correcting codes \cite{ZXH+22,ZLG+24}, are directly compatible with many zkSNARKs \cite{XZZ+19,S20,GWC19} since they use the same data encoding. However, unlike Merkle trees, generating all $N$ proofs is inefficient: the naive cost is ${\cal O}(N^2)$, and even improved constructions \cite{SCP+22,TAB+20,LZ22,ZXH+22,ZLG+24} require ${\cal O}(N \log N)$ time.
To the best of our knowledge, HydraProofs \cite{PPP25} is the only existing VC scheme that can generate all proofs in ${\cal O}(N)$ time while being directly compatible with zkSNARKs based on multi-linear polynomials \cite{XZZ+19,S20,CBB+23}.
We do not consider lattice-based VC techniques \cite{PST+13,PPS21,WW23}, as the corresponding lattice-based zkSNARKs remain largely impractical.

%For completeness, we note other methods for loading (parts of) a dataset into a zkSNARK, such as accumulator-based approaches \cite{OWW+20,XHT+24}, which incur super-linear costs for partial proofs or require opening the pre-image within the circuit. We also exclude lattice-based VC techniques \cite{PST+13,PPS21,WW23}, as the corresponding lattice zkSNARKs are still largely impractical.

\section{Preliminaries}
{\color{black}
\vspace{2mm}
\noindent
{\bf Notation.}}
For any integer $n>0$, let
$[0, n)=\{0, 1, \ldots, n-1\}$. 
Let ${\bf m}$ be a vector of length $n$. For any $i\in [0, n)$, we denote by ${\bf m}[i]$ the $i$-th element of ${\bf m}$. For any $I\subseteq [0, n)$, we denote ${\bf m}[I]=({\bf m}[i])_{i\in I}$. Besides, we denote 
$$
{\bf m}_L=({\bf m}[0], \ldots, {\bf m}[n/2-1]), \quad
{\bf m}_R =({\bf m}[n/2], \ldots, {\bf m}[n-1]), 
$$
We denote by $\mathbb{G}$ a {\color{black} multiplicative} group, and by $\mathbb{F}_p$ the finite field of prime order $p$.
%We use bolded lower-case letters such as ${\bf a}$ to denote vectors over a finite field $\mathbb{F}_p$, and use bolded upper-case letters such as ${\bf A}$ to denote vectors over a group $\mathbb{G}$.
For a vector ${\bf A}\in \mathbb{G}^n$, a vector ${\bf b}\in \mathbb{F}_p^n$ and a scalar $x \in \mathbb{F}_p$, we denote
$$
{\bf A}^x=({\bf A}[0]^x, \ldots, {\bf A}[n-1]^x), 
x{\bf b}=(x{\bf b}[0],\ldots,x{\bf b}[n-1]),
\langle {\bf A}, {\bf b}\rangle =\prod_{i\in [0, n)}{\bf A}[i]^{{\bf b}[i]}.
$$
{\color{black}Here, $\langle {\bf A}, {\bf b}\rangle$ compactly denotes the multi-exponentiation of ${\bf A}$ by ${\bf b}$, which our FC scheme is designed to support.}
For two vectors ${\bf A}, {\bf A}^\prime \in \mathbb{G}^n$, we denote ${\bf A}\circ {\bf A}^\prime = ({\bf A}[0]{\bf A}^\prime[0], \ldots, {\bf A}[n-1]{\bf A}^\prime[n-1])$.
Let ${\bf u}_i$ be the $i$-th unit vector, with 1 at the $i$-th position and zeros elsewhere.
We define ${\sf Bin}(i) = (i_{\ell-1}, \ldots, i_0)$ as the {\it bit decomposition} of an $\ell$-bit integer $i$ if $i = \sum_{k \in [0, \ell)}i_{k}2^{k}.$
Any vector $\mathbf{m} \in \mathbb{F}_p^n$ can be encoded as a multilinear polynomial 
$f_{\mathbf{m}}: \mathbb{F}_p^{\log n} \to \mathbb{F}_p$ as below:
\begin{equation}\label{eq:multi-extension}
f_{\mathbf{m}}(\mathbf{x}) = 
\sum_{i \in [0, n)} \mathbf{m}[i] \prod_{k \in [0, \log n)} \left( i_k x_k  + (1 - i_k)(1 - x_k) \right),
\end{equation}
where $\mathbf{x} = (x_{\log n - 1}, \ldots, x_0)$, ${\sf Bin}(i) = (i_{\log n - 1}, \ldots, i_0)$.
We refer to $f_{\mathbf{m}}$ as the {\it multi-linear extension} of ${\bf m}$.
For any finite set $S$, we denote by $s\leftarrow S$ the process of choosing $s$ uniformly  from $S$. We denote by
$y\leftarrow {\sf Alg}(x)$ the process of running an algorithm ${\sf Alg}$ on an input $x$ and assigning the output to $y$.
We say that a function $\epsilon(\lambda)$ is {\it negligible} in $\lambda$ and denote $\epsilon(\lambda)={\sf negl}(\lambda)$, if $\epsilon(\lambda)=o(\lambda^{-c})$ for all $c>0$.
\begin{comment}
\begin{align*} 
{\bf A}^x&=(A_0^x, \ldots, A_{n-1}^x);~~~~
\hspace{3mm}\langle {\bf A}, {\bf b}\rangle =\prod_{i\in [0, n)}A_i^{b_i};\hspace{3mm}{\bf A}[V]=(A_i)_{i\in V}\\
{\bf A}_L&=(A_0, \ldots, A_{n/2-1}); 
\hspace{3mm}{\bf A}_R =(A_{n/2}, \ldots, A_{n-1}).
\end{align*} 
\end{comment}
Our security proofs are in the
{\em random oracle model} (ROM), formalized in \cite{BR93}: we model a cryptographic hash function as a truly random function, accessible to all parties only via oracle queries. Specifically, we use two random oracles $H, H^\prime: \{0, 1\}^*\rightarrow \mathbb{F}_p$.

\begin{comment}
We denote by ${\sf BG}(1^\lambda)$
a probabilistic polynomial-time (PPT) algorithm that takes a security parameter $\lambda$ as input and
outputs a {\it bilinear context} ${\bf bg}=(p, \mathbb{G}_1, \mathbb{G}_2, \mathbb{G}_{\rm T}, e,   g_1,   g_2)$, where  $\mathbb{G}_1=\langle g_1\rangle, \mathbb{G}_2=\langle g_2\rangle$ and $\mathbb{G}_{\rm T}$ are cyclic groups of prime order $p\approx 2^\lambda$, and $e:\mathbb{G}_1\times \mathbb{G}_2\rightarrow \mathbb{G}_{\rm T}$ is a    {\it pairing} with the following properties:
(1) {\it efficiently computable}: There is a polynomial-time algorithm to compute
$e(u,v)$ for all $u\in \mathbb{G}_1$ and $v\in \mathbb{G}_2$;
(2) {\it non-degenerate}: $e(g_1,g_2)$ is  a  generator of   $\mathbb{G}_{\rm T}$.
(3) {\it bilinear}: $ e(u^\alpha,v^\beta)=e(u,v)^{\alpha\beta}$ for any $u\in \mathbb{G}_1, v\in \mathbb{G}_2$ and $\alpha,\beta\in \mathbb{Z}_p$.
We  assume {\it Type-3} bilinear groups $(\mathbb{G}_1,\mathbb{G}_2)$ where
{\it no} efficiently computable homomorphisms exist between   $\mathbb{G}_1$ and $\mathbb{G}_2$.
For any ${\bf A}\in \mathbb{G}_1^n, {\bf B}\in \mathbb{G}_2^n$, we denote 
${\bf A}*{\bf B}=\prod_{i\in [0, n)}e(A_i, B_i).$
\end{comment}

%\subsection{Bilinear group}

{\color{black}
\vspace{2mm}
\noindent
{\bf Bilinear Group.}}
We denote by  ${\sf BG}(1^\lambda)$ a bilinear group generator that takes 
a security parameter $\lambda$ as input and outputs a bilinear group context
${\bf bg}=(p, \mathbb{G}_1, \mathbb{G}_2, \mathbb{G}_{\rm T}, e,   g_1, g_2)$, where $\mathbb{G}_1
=\langle g_1\rangle, \mathbb{G}_2=\langle g_2 \rangle$ and $\mathbb{G}_{\rm T}$ are groups of prime order $p$, and  $e:\mathbb{G}_1\times \mathbb{G}_2\rightarrow \mathbb{G}_{\rm T}$ is a {\em pairing} such that 
$ e(u^\alpha,v^\beta)=e(u,v)^{\alpha\beta}$ for all
$u\in \mathbb{G}_1, v\in \mathbb{G}_2$ and $\alpha,\beta\in \mathbb{Z}_p$.
We  assume {\it Type-3} bilinear groups where
{\it no} efficiently computable homomorphisms exist between   $\mathbb{G}_1$ and $\mathbb{G}_2$.
For any ${\bf A}\in \mathbb{G}_1^n, {\bf B}\in \mathbb{G}_2^n$, we denote 
${\bf A}*{\bf B}=\prod_{i\in [0, n)}e({\bf A}[i], {\bf B}[i]).$

\begin{comment}
\subsection{Merkle Tree}
Merkle tree \cite{M87} has been widely used for the vector commitment because of its simplicity and efficiency. The prover time is linear in the size of the vector while the verifier time and proof size are logarithmic in the size of the vector. 
Given a vector of ${\bf r}=(r_0, \ldots, r_{N-1})$, it consists of three algorithms:	
${\sf rt}\leftarrow {\sf MT.Commit}({\bf r})$, ${\sf path}_i \leftarrow{\sf MT.Open}(i,{\bf r})$ and $\{1,0\} \leftarrow {\sf MT.Verify}({\sf rt},i,r_i,{\sf path}_i)$.
\end{comment}

\subsection{Functional Commitments with Batch Openings}
Libert et al. \cite{LRY16} introduced a {\it functional commitment} (FC) model for linear functions over finite fields. 
Multi-exponentiations over bilinear groups naturally extends these linear functions: for ${\bf A} \in \mathbb{G}_1^n$ and ${\bf b} \in \mathbb{F}_p^n$, $\langle {\bf A}, {\bf b}\rangle=\prod_{i=0}^{n-1}{\bf A}[i]^{{\bf b}[i]}$.
We extend the FC model of \cite{LRY16} to support multi-exponentiations and batched evaluations on multiple field vectors. 
An FC scheme ${\sf FC}=({\sf Setup}, {\sf Commit}, {\sf BOpen}, {\sf BVerify})$ in our model consists of four algorithms:
\begin{itemize}
	\it
	\item
	${\sf FC.Setup}(1^\lambda, 1^n)\rightarrow {\sf pp}$:
	Given the security parameter $\lambda$ and the vector size $n$, outputs public parameters ${\sf pp}$,  an implicit input to all remaining algorithms.	
	\item
	${\sf FC.Commit}({\bf A})\rightarrow C$:
	Given a vector ${\bf A}\in\mathbb{G}_1^n$, outputs a commitment $C$.
	
	%\item
	%${\sf FC.Open}({\bf A}, {\bf b})\rightarrow \pi_y$:
	%Given ${\bf A}$ and a vector ${\bf b}\in \mathbb{F}_p^n$, the algorithm outputs a proof $\pi_y$ for $y=\langle {\bf A}, {\bf b}\rangle$.
	
	\item ${\sf FC.BOpen}(C, {\bf A}, \{{\bf b}^{(i)}\}_{i\in [0, t)}, \{y_{i}\}_{i\in [0, t)}) \to \pi_y$:
	Given a commitment $C$ (to $\bf A$), a vector ${\bf A}$, and vectors ${\bf b}^{(0)},\ldots,{\bf b}^{(t-1)}\in \mathbb{F}_p^n$, outputs a batch proof $\pi_y$ for $\{y_i=\langle {\bf A}, {\bf b}^{(i)}\rangle\}_{i\in [0, t)}$.
	
	%\item
	%${\sf FC.Verify}(C, {\bf b}, y, \pi_y)\rightarrow\{0, 1\}$:
	%The algorithm verifies the proof $\pi_y$ that $y=\langle {\bf A}, {\bf b}\rangle$ against the commitment $C$, and outputs 1 (accept) or 0 (reject). 
	
	\item ${\sf FC.BVerify}(C, \{{\bf b}^{(i)}\}_{i\in [0, t)}, \{y_{i}\}_{i\in [0, t)}, \pi_y) \to \{0,1\}$: Verifies the batch proof $\pi_y$ for $\{y_i = \langle {\bf A}, {\bf b}^{(i)}\rangle\}_{i\in[0,t)}$ against $C$ and outputs 1 (accept) or 0 (reject).
\end{itemize}

%While Libert et al. \cite{LRY16} requires that an FC scheme should be correct, function binding and hiding, in this paper we will ignore the hiding property, which is not needed in the our proposed applications.
An FC scheme is {\em correct} if ${\sf FC.BVerify}$ always outputs 1, provided that all algorithms are correctly executed. 
\begin{definition}
	[\bf Correctness]\label{df:FCCor}
	For any security parameter $\lambda$, any integer $n>0$,
	any vector ${\bf A}\in \mathbb{G}_1^n$, any integer $t>0$, any vectors ${\bf b}^{(0)},\ldots,{\bf b}^{(t-1)} \in \mathbb{F}_p^n$, define $y_i = \langle \mathbf{A}, \mathbf{b}^{(i)} \rangle$ for all $i\in [0, t)$, then 
	$$
	\Pr
	\begin{bmatrix}
	{\sf pp} \leftarrow {\sf FC.Setup}(1^\lambda, 1^n),
	C \leftarrow {\sf FC.Commit}({\bf A}),\\
	\pi_{y} \leftarrow  {\sf FC.BOpen}(C, {\bf A}, \{{\bf b}^{(i)}\}_{i\in [0, t)}, \{y_{i}\}_{i\in [0, t)}):\\
	{\sf FC.BVerify}(C, \{{\bf b}^{(i)}\}_{i\in [0, t)}, \{y_{i}\}_{i\in [0, t)}, \pi_y)=1
	\end{bmatrix}=1.
	$$
\end{definition}

Referring to \cite{WW23}, an FC scheme is {\em function binding} if no probabilistic polynomial time (PPT) adversary can open the  commitment $C$ to two distinct values for the same field vector. 
\begin{definition}
	[\bf Function binding] \label{df:FCSd}
	For any security parameter $\lambda$, any integer $n>0$, and any PPT adversary $\mathcal{A}$,
	$$
	\Pr
	\begin{bmatrix}
	{\sf pp} \leftarrow {\sf FC.Setup}(1^\lambda, 1^n),
	(C, \{{\bf b}^{(i)}, y_{i}, \hat{y}_{i}\}_{i\in [0, t)}, \pi_{y}, \pi_{\hat{y}}) \leftarrow\mathcal{A}({\sf pp}):\\
	({\sf FC.BVerify}(C, \{{\bf b}^{(i)}\}_{i\in [0, t)}, \{{y}_{i}\}_{i\in [0, t)}, \pi_{{y}})=1) ~\wedge\\
	({\sf FC.BVerify}(C, \{{\bf b}^{(i)}\}_{i=0}^{t-1}, \{\hat{y}_{i}\}_{i=0}^{t-1}, \pi_{\hat{y}})=1) \wedge(\exists j\in [0, t): y_j\neq \hat{y}_{j})\\
	\end{bmatrix}\leq {\sf negl}(\lambda).
	$$
\end{definition}

\begin{comment}	
Referring to \cite{KZG10}, the security of batch opening (called batch binding) requires that 
no adversary can batch open $\{f_{{\bf b}^{(i)}}\}_{i\in [t]}$ at the commitment $C$ in a manner that conflicts with the result of opening $f_{{\bf b}^{(j)}}(j\in [t])$ individually. 
\begin{definition}
	[\bf Batch binding]\label{df:batch1binding}
	For any security parameter $\lambda$, any integer $n>0$, and any PPT adversary $\mathcal{A}$,
	{\small
		$$
		\Pr
		\begin{bmatrix}
		{\sf pp} \leftarrow {\sf FC.Setup}(1^\lambda, 1^n),
		(C, \{{\bf b}^{(i)}, \hat{y}_{i}\}_{i\in [t]}, \pi_{\hat{y}}, j\in [t], y_{j}, \pi_{y_{j}}) \leftarrow\mathcal{A}({\sf pp}):\\
		{\sf FC.BVerify}(C, \{{\bf b}^{(i)}\}_{i\in [t]}, \{\hat{y}_{i}\}_{i\in [t]}, \pi_{\hat{y}})=1 ~\wedge\\
		{\sf FC.Verify}(C, {\bf b}^{(j)}, {y}_{j}, \pi_{{y}_{j}})=1 ~\wedge ~ (y_{j} \neq \hat{y}_{j}) \\
		\end{bmatrix}\leq {\sf negl}(\lambda).
		$$}
\end{definition}
\end{comment}

\subsection{Polynomial Commitments}
A {\em polynomial commitment} (PC) scheme \cite{KZG10,BGV11,PST13,WTS+18,ZXZ+20} enables a prover to commit to an $n$-variate polynomial of maximum degree $d$ (per variable), and later open its evaluation at any point by generating an evaluation proof. A PC scheme ${\sf PC}=({\sf Setup}, {\sf Commit}, {\sf Eval}, {\sf Verify})$ consists of four algorithms:
\begin{itemize}
	\it
	\item
	${\sf PC.Setup}(1^\lambda, 1^d, 1^n)\rightarrow {\sf pp}$:
	Given the security parameter $\lambda$, the maximum degree per variable $d$ and the number of variables $n$, outputs the public parameters ${\sf pp}$, an implicit input to all remaining algorithms.	
	\item
	${\sf PC.Commit}(f)\rightarrow C$:
	Outputs the commitment of $f$.
	\item
	${\sf PC.Eval}(f, {\bf r})\rightarrow (y, \pi)$:
	Generates a proof $\pi$ showing that $f({\bf r})=y$.
	\item
	${\sf PC.Verify}(C, {\bf r}, y, \pi)\rightarrow \{0, 1\}$: Returns 1 if for the committed polynomial $f$ it holds that $f({\bf r})=y$.
\end{itemize}

Informally, a PC scheme is {\em complete} if the verifier always accepts the proof for a correctly evaluated point.
\begin{definition}
	[\bf Completeness]\label{df:PCCom}
	A PC scheme is complete if for any $\lambda,n,d$,
	$$
	\Pr
	\begin{bmatrix}
	{\sf pp} \leftarrow {\sf PC.Setup}(1^\lambda, 1^d, 1^n),
	C \leftarrow {\sf PC.Commit}(f),\\
	(y, \pi) \leftarrow  {\sf PC.Eval}(f, {\bf r}):
	{\sf PC.Verify}(C, {\bf r}, y, \pi)=1
	\end{bmatrix}=1.
	$$
\end{definition}

A PC scheme is {\em knowledge sound} if for any PPT adversary that produces an accepting proof, there exists an extractor $\mathcal{E}_{PC}$ that extracts the committed polynomial $f$, such that the probability that $f({\bf r})\neq y$ is negligible.
\begin{definition}
	[\bf Knowledge Soundness]\label{df:PCKS}
	A PC scheme is knowledge sound, if for any $\lambda$, $n$, $d$ and PPT adversary $\mathcal{A}_{PC}$ there exists an extractor $\mathcal{E}_{PC}$ having access to $\mathcal{A}_{PC}$ such that:
	$$
	\Pr
	\begin{bmatrix}
	{\sf pp} \leftarrow {\sf PC.Setup}(1^\lambda, 1^d, 1^n),
	(C, y, \pi, {\bf r}) \leftarrow \mathcal{A}_{PC}({\sf pp}),\\
	f \leftarrow \mathcal{E}_{PC}^{{\cal A}_{PC}}({\sf pp}):
	({\sf PC.Verify}(C, {\bf r}, y, \pi)=1) \wedge\\
	((f({\bf r})\neq y )\vee (C\neq {\sf PC.Commit}(f)))
	\end{bmatrix}\leq {\sf negl}(\lambda).
	$$
\end{definition}
%A PC scheme is {\em zero-knowledge} if a verifier learns nothing more besides the evaluation. 

%A zero-knowledge polynomial commitment (ZK-PC) ensures an adversary cannot distinguish interactions with a real polynomial from those with a simulator (that does not know the polynomial), so no information about the polynomial itself is leaked.

%{\color{blue}Informally, a PC scheme is {\em zero-knowledge} if an adversary cannot distinguish interactions with a real polynomial from those with a simulator (that does not know the polynomial). For the formal definition, see Appendix \ref{app:def}.}

{\color{black}
For our construction, we use the PST multivariate polynomial commitment scheme \cite{PST13}, which has commitment and evaluation complexities {\em linear} in the size of the polynomial, and proof size and verification time logarithmic in it.}

\vspace{2mm}
\noindent
{\bf HyperEval Algorithm.}
In some scenarios, the prover needs to generate $2^n$ evaluation proofs of a multi-linear polynomial $f:\mathbb{F}_p^{n}\rightarrow \mathbb{F}_p$ at all hypercube points ${\bf i}\in \{0, 1\}^{n}$. This is formalized by the ${\sf PC.HyperEval}$ \cite{PPP25}:
\begin{itemize}
	\it
	\item
	${\sf PC.HyperEval}(f)\rightarrow \{y_i, \pi_i\}_{i\in [0, 2^n)}$: Generate $2^n$ evaluation proofs showing that $y_i=f({\sf Bin}(i))$ for all $i\in [0, 2^n)$.
\end{itemize}
The ${\sf HyperEval}$ algorithm for the PST scheme achieves ${\cal O}(n2^n)$ complexity \cite{SCP+22}.

\subsection{Vector Commitments}
A {\it vector commitment} (VC) scheme \cite{CF13,LY10} enables a prover to commit to a vector ${\bf m}\in \mathbb{F}_p^N$ and later prove that an element $m_i$ is the $i$-th element of the committed vector. 
A VC scheme ${\sf VC}=({\sf Setup}, {\sf Commit}, {\sf Open}, {\sf Verify})$ consists of the following algorithms:
\begin{itemize}
	\it
	\item
	${\sf VC.Setup}(1^\lambda, 1^{N})\rightarrow {\sf pp}$:
	Given the security parameter $\lambda$ and vector size $N$, outputs the public parameters, an implicit input to all remaining algorithms.
	\item
	${\sf VC.Commit}({\bf m})\rightarrow (C, {\sf aux})$:
	Outputs the vector commitment of ${\bf m}$ and an auxiliary information ${\sf aux}$.
	\item
	${\sf VC.Open}({\sf aux}, i, {\bf m})\rightarrow \pi_{i}$:
	Outputs an opening proof $\pi_{i}$ showing that ${m}_{i}$ is the $i$-th element of ${\bf m}$.
	\item
	${\sf VC.Verify}(C, i, {m}_{i}, \pi_{i})\rightarrow \{0, 1\}$: Returns 1 if ${m}_{i}$ is the $i$-th element of the committed vector.	
\end{itemize}

A VC scheme is {\em correct} if ${\sf VC.Verify}$ always outputs 1, provided that all algorithms are correctly executed. 
\begin{definition}
	[\bf Correctness]\label{df:VCCor}
	For any security parameter $\lambda$, any integer $N>0$, any vector ${\bf m}\in \mathbb{F}_p^N$ and any index $i\in [0, N)$, a VC scheme is correct if
	$$
	\Pr
	\begin{bmatrix}
	{\sf pp} \leftarrow {\sf VC.Setup}(1^\lambda, 1^N),
	(C, {\sf aux}) \leftarrow {\sf VC.Commit}({\bf m}),\\
	\pi_{i} \leftarrow  {\sf VC.Open}({\sf aux}, i, {\bf m}):
	{\sf VC.Verify}(C, i, {\bf m}[i], \pi_i)=1
	\end{bmatrix}=1.
	$$
\end{definition}

A VC scheme is {\it position binding} if the probability that a PPT adversary generates valid proofs for two different elements at the same index $i$ is negligible. 
\begin{definition}
	[\bf Position binding] \label{df:VCSd}
	For any security parameter $\lambda$, any integer $N>0$, and any PPT adversary $\mathcal{A}$,
	$$
	\Pr
	\begin{bmatrix}
	{\sf pp} \leftarrow {\sf VC.Setup}(1^\lambda, 1^{N}),
	({C}, i, m_i, m_i^\prime, \pi_{i}, \pi_{i}^\prime) \leftarrow\mathcal{A}({\sf pp}):\\
	({\sf VC.Verify}(C, i, m_i, \pi_{i})=1) \wedge \\
	({\sf VC.Verify}(C, i, m_i^\prime, \pi_{i}^\prime)=1) \wedge
	(m_i\neq m_i^\prime)  \\
	\end{bmatrix}\leq {\sf negl}(\lambda).
	$$
\end{definition}
%{\color{blue}A vector commitment scheme is {\em position hiding} \cite{PPP25} if an adversary cannot distinguish real openings from simulated ones, thereby hiding position information. The formal definition is given in Appendix~\ref{app:def}.}

%A vector commitment scheme is {\em position hiding} if an adversary cannot distinguish real openings of vector positions (generated from the actual vector) from simulated openings (produced without the vector, via a simulator), ensuring position information is kept secret. For the formal definition, see Appendix \ref{app:def}.

\vspace{2mm}
\noindent
{\bf OpenAll Algorithm.}
Recent VC schemes \cite{SCP+22,WUP23,TAB+20,LZ22,PPP25} support an $\sf OpenAll$ algorithm that {\it efficiently} generates all opening proofs in an offline pre-processing phase and later reply to open queries with no additional computation. The formal definition of the $\sf VC.OpenAll$ algorithm is as follows.
\begin{itemize}
	\item
	${\sf VC.OpenAll}({\sf aux}, {\bf m})\rightarrow \{\pi_{i}\}_{i\in [0, N)}$:
	Generate $N$ opening proofs, one for each element of the vector.
\end{itemize}

\section{A Functional Commitment with Batch Opening}\label{sec:ourFC}	
%1.1. 现有的论文提出了一个argument
%1.2. 这个argument是干什么的
B{\"u}nz et al. \cite{BMM+21} proposed a known-exponent {\it multi-exponentiation inner product argument} (MIPA) which {\color{black} employs structured-key variants of the commitment in \cite{AFG+16} to commit to $({\bf A},{\bf b})$ and}
allows one to prove that a value $U\in \mathbb{G}_1$ is equal to $\langle {\bf A}, {\bf b}\rangle$ against this commitment, where ${\bf A}\in \mathbb{G}_1^n$ is hidden and ${\bf b}\in \mathbb{F}_p^n$ is publicly known and structured of the form $(1, b, \ldots, b^{n-1})$ for some $b\in \mathbb{F}_p$.

%2.1 我们改进了什么
We observe that in the MIPA, the vector $\mathbf{b}$ can be any vector in $\mathbb{F}_p^n$ and extend the argument to the {\em first} FC scheme with batch openings for multi-exponentiations. 
%2.2 我们如何实现了这个改进
To develop the final FC scheme, we first transform the MIPA (which originally verified $\langle {\bf A}, {\bf b}\rangle$ for fixed ${\bf A}$ and ${\bf b}$) into an FC scheme that commits to ${\bf A}$, enabling verification of $\langle {\bf A}, {\bf b}\rangle$ for {\em any} ${\bf b}$.
Opening the commitment of vector $\bf A$ to $y_0=\langle {\bf A}, {\bf b}^{(0)}\rangle, \ldots, y_{t-1}=\langle {\bf A}, {\bf b}^{(t-1)} \rangle$ is equivalent to proving: 
\begin{equation}\label{eq:distinctB}
\{\langle {\bf A}, {\bf b}^{(i)}\rangle=y_i\}_{i\in [0, t)}.
\end{equation}
As shown in Lemma \ref{lem:equalityB} (Appendix \ref{app:B}), this task can be reduced to showing that a randomized linear combination of these equations holds:
\begin{equation}\label{eq:batchB}
\langle {\bf A}, \sum_{i\in [0, t)}r_i{\bf b}^{(i)}\rangle=\prod_{i\in [0, t)} y_i^{r_i},
\end{equation}
where each $r_i$ is random and may be computed with a hash function. 
%2.3 我们方案可以用来构造什么
We then show that this FC scheme gives a VC scheme in which ${\sf VC.OpenAll}$ algorithm runs in linear time (Section \ref{sec:vc}).
Below we give a formal description of our FC scheme, assuming without loss of generality that the vector length $n$ is a power of 2, i.e., $n = 2^\ell$ for some integer $\ell$.
\begin{itemize}
	\it
	\item
	${\sf FC.Setup}(1^\lambda, 1^n)$:
	Choose $(p, \mathbb{G}_1, \mathbb{G}_2, \mathbb{G}_{\rm T}, e, g_1, g_2)\leftarrow {\sf BG}(1^\lambda)$ and $\beta\leftarrow \mathbb{F}_p$. Compute
	$
	{\bf v}=\big((g_2)^{\beta^{2i}}\big)_{i\in [0, n)} 
	$
	and output  ${\sf pp}={\bf v}$.
	{\color{black} Note that $\beta$ must never be known to the adversary, and thus our scheme requires a trusted setup.}
	
	\item
	${\sf FC.Commit}({\bf A})$:
	Given {\color{black}the implicit input ${\sf pp} = {\bf v} \in \mathbb{G}_2^n$} and a vector ${\bf A}\in \mathbb{G}_1^n$, output  a commitment 
	$
	{C}={\bf A}*{\bf v}.
	$
	
	\item
	${\sf FC.BOpen}(C, {\bf A}, \{{\bf b}^{(i)}\}_{i\in [0, t)}, \{y_{i}\}_{i\in [0, t)})$:
	Parse {\color{black} the implicit input} ${\sf pp}$ as ${\bf v}$.
	Set
	\begin{equation}\label{eqn:b}
	{\bf b}=\sum_{i\in [0, t)}r_i{\bf b}^{(i)}, 
	\end{equation}	
	where $r_i=H({\bf b}^{(i)}, C, \{{\bf b}^{(k)}\}_{k\in [0, t)}, \{y_{k}\}_{k\in [0, t)}),\forall i\in[0, t),$
	and $H: \{0, 1\}^*\rightarrow \mathbb{F}_p$ is a hash function. 		  
	Set ${\bf A}_0={\bf A}, {\bf v}_{0}={\bf v}, {\bf b}_{0}={\bf b}$, and let $x_0$ be the empty string.
	For $j=1, 2, \ldots, \ell$, compute 
	\begin{equation}\label{eq:folding}
	\begin{aligned}
	{\bf L}_j&=(({\bf A}_{j-1})_R*({\bf v}_{j-1})_L, \langle ({\bf A}_{j-1})_R, {({\bf b}_{j-1})_L}\rangle),\\
	{\bf R}_j&=(({\bf A}_{j-1})_L*({\bf v}_{j-1})_R, \langle({\bf A}_{j-1})_L, {({\bf b}_{j-1})_R}\rangle),\\
	x_j&=H^\prime(x_{j-1}, {\bf L}_j, {\bf R}_j), \\
	\hspace{2mm}{\bf A}_{j}&=({\bf A}_{j-1})_L\circ(({\bf A}_{j-1})_R)^{x_j},\\
	{\bf v}_{j}&=({\bf v}_{j-1})_L\circ(({\bf v}_{j-1})_R)^{1/x_j}, \\
	\hspace{2mm}{\bf b}_{j}&=({\bf b}_{j-1})_L+(1/x_j)\cdot({\bf b}_{j-1})_R,
	\end{aligned}
	\end{equation}
	where $H^\prime:\{0,1\}^*\rightarrow \mathbb{F}_p$ is a hash function. 
	Output   	
	$\pi_y=\big\{\{{\bf L}_j, {\bf R}_j\}_{j=1}^{\ell}, {\bf A}_\ell\big\}.$
	
	\item
	${\sf FC.BVerify}(C, \{{\bf b}^{(i)}\}_{i\in [0, t)}, \{y_{i}\}_{i\in [0, t)}, \pi_y)$:
	{\color{black}Parse the implicit input ${\sf pp}$ as ${\bf v}$.}
	Compute $\{r_i\}_{i\in[0, t)}$ and $\bf b$ as in algorithm $\sf FC.BOpen$, and then compute 
	\begin{equation}\label{eq:comy}
		y=\prod_{i\in [0, t)} y_{i}^{r_i}.
	\end{equation}
	Let ${\bf C}_0=({C}, y)$ and  let $x_0$ be the empty string.
	Parse $\pi_y$ as $\{\{{\bf L}_j, {\bf R}_j\}_{j=1}^{\ell}, {\bf A}_\ell\}$.
	For $j=1,2,\ldots,\ell$, compute $x_j,{\bf v}_j$ and ${\bf b}_j$ as per Eq. \eqref{eq:folding}, and
	compute
	$${\bf C}_j = {\bf L}_j^{x_j}\circ {\bf C}_{j-1} \circ ({\bf R}_j)^{1/x_j}.$$
	Finally, if 
	${\bf C}_\ell=({\bf A}_\ell*{\bf v}_\ell, \langle {\bf A}_\ell, {{\bf b}_\ell}\rangle),$
	output 1; otherwise, output 0.
\end{itemize}

\vspace{2mm}
\noindent
{\bf Correctness and security.}	
The correctness of our FC scheme is proven below, while its security (function binding) is proven in Appendix \ref{sec:FCSnd}.

\begin{theorem}
	The FC scheme satisfies the correctness property (Definition  \ref{df:FCCor}).
\end{theorem}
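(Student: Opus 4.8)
The plan is to prove correctness by induction on the folding rounds $j=0,1,\ldots,\ell$ (with $\ell=\log n$), establishing the invariant that the pair ${\bf C}_j$ computed by ${\sf FC.BVerify}$ equals $({\bf A}_j*{\bf v}_j,\langle {\bf A}_j,{\bf b}_j\rangle)$, where ${\bf A}_j,{\bf v}_j,{\bf b}_j$ are the folded vectors produced inside ${\sf FC.BOpen}$. Throughout, write $[\cdot]_1$ and $[\cdot]_2$ for the two coordinates of a pair (the $\mathbb{G}_{\rm T}$ part and the $\mathbb{G}_1$ part). First I would observe that prover and verifier compute identical challenges and identical reconstructed vectors: both begin with $x_0$ the empty string and derive $x_j=H^\prime(x_{j-1},{\bf L}_j,{\bf R}_j)$ from the same transmitted ${\bf L}_j,{\bf R}_j$, and both reconstruct ${\bf v}_j$ and ${\bf b}_j$ from the public ${\bf v}_0={\bf v}$ and ${\bf b}_0={\bf b}=\sum_{i}r_i{\bf b}^{(i)}$ via Eq. \eqref{eq:folding}. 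Hence the two executions agree on all of $x_j,{\bf v}_j,{\bf b}_j$, and the invariant at $j=\ell$ makes the final test ${\bf C}_\ell=({\bf A}_\ell*{\bf v}_\ell,\langle {\bf A}_\ell,{\bf b}_\ell\rangle)$ hold, since the prover includes the correct ${\bf A}_\ell$ in $\pi_y$; so ${\sf FC.BVerify}$ outputs $1$.

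For the base case $j=0$ I would check ${\bf C}_0=(C,y)$ against $({\bf A}_0*{\bf v}_0,\langle {\bf A}_0,{\bf b}_0\rangle)$. The first coordinate is immediate because $C={\bf A}*{\bf v}$ and $({\bf A}_0,{\bf v}_0)=({\bf A},{\bf v})$. For the second, the homomorphism of the multi-exponentiation gives $y=\prod_{i}y_i^{r_i}=\prod_i\langle {\bf A},{\bf b}^{(i)}\rangle^{r_i}=\langle {\bf A},\sum_i r_i{\bf b}^{(i)}\rangle=\langle {\bf A}_0,{\bf b}_0\rangle$, using the correctness hypothesis $y_i=\langle {\bf A},{\bf b}^{(i)}\rangle$ and ${\bf b}_0={\bf b}$; this is exactly the reduction recorded in Eq. \eqref{eq:batchB} and Lemma \ref{lem:equalityB}.

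The inductive step is the heart of the argument, and I would treat the two coordinates separately using the folding relations from Eq. \eqref{eq:folding}. For the $\mathbb{G}_{\rm T}$ coordinate, expanding ${\bf A}_j*{\bf v}_j$ entry-by-entry via bilinearity of $e$ produces four groups of terms. The two diagonal groups $({\bf A}_{j-1})_L*({\bf v}_{j-1})_L$ and $(({\bf A}_{j-1})_R)^{x_j}*(({\bf v}_{j-1})_R)^{1/x_j}$ collapse, the second because the exponents $x_j$ and $1/x_j$ cancel, and together they reproduce ${\bf A}_{j-1}*{\bf v}_{j-1}$; the two cross groups reassemble into $[{\bf L}_j]_1^{x_j}$ and $[{\bf R}_j]_1^{1/x_j}$. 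Thus ${\bf A}_j*{\bf v}_j=[{\bf L}_j]_1^{x_j}\cdot({\bf A}_{j-1}*{\bf v}_{j-1})\cdot[{\bf R}_j]_1^{1/x_j}$. A parallel expansion of $\langle {\bf A}_j,{\bf b}_j\rangle$, now using the homomorphism of multi-exponentiation in place of bilinearity (with the same $x_j$/$1/x_j$ cancellation in the diagonal), gives $\langle {\bf A}_j,{\bf b}_j\rangle=[{\bf L}_j]_2^{x_j}\cdot\langle {\bf A}_{j-1},{\bf b}_{j-1}\rangle\cdot[{\bf R}_j]_2^{1/x_j}$. Combining these with the inductive hypothesis ${\bf C}_{j-1}=({\bf A}_{j-1}*{\bf v}_{j-1},\langle {\bf A}_{j-1},{\bf b}_{j-1}\rangle)$ and the verifier's update ${\bf C}_j={\bf L}_j^{x_j}\circ {\bf C}_{j-1}\circ {\bf R}_j^{1/x_j}$ establishes the invariant at round $j$.

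The main obstacle is purely the algebraic bookkeeping in the inductive step: tracking which half carries $x_j$ versus $1/x_j$, and verifying that the deliberately mismatched exponents (${\bf A}$ and ${\bf b}$ folded with $x_j$, but ${\bf v}$ and the right half of ${\bf b}$ with $1/x_j$) cancel precisely in the diagonal terms so that ${\bf A}_{j-1}*{\bf v}_{j-1}$ and $\langle {\bf A}_{j-1},{\bf b}_{j-1}\rangle$ re-emerge, while the off-diagonal terms line up exactly with ${\bf L}_j$ and ${\bf R}_j$. No cryptographic assumption is needed here, since correctness is an identity; once the exponent cancellations are checked, the induction closes and the final verification equation holds.
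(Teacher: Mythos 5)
Your proof is correct and follows essentially the same route as the paper's: induction over the folding rounds with the invariant ${\bf C}_j=({\bf A}_j*{\bf v}_j,\langle {\bf A}_j,{\bf b}_j\rangle)$, the same base case via $y=\prod_i y_i^{r_i}=\langle {\bf A},\sum_i r_i{\bf b}^{(i)}\rangle$, and the same bilinearity/homomorphism expansion with the $x_j$ versus $1/x_j$ cancellation in the inductive step. The only difference is cosmetic---you expand the folded side $({\bf A}_j*{\bf v}_j,\langle{\bf A}_j,{\bf b}_j\rangle)$ and match it to the verifier's update, whereas the paper expands the verifier's update ${\bf L}_j^{x_j}\circ{\bf C}_{j-1}\circ({\bf R}_j)^{1/x_j}$ and regroups it into the folded form---so the two arguments are the same identity read in opposite directions.
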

\begin{proof}
	To prove the correctness of the FC scheme, it suffices to show that for all $k\in \{0, 1, \ldots, \ell\}$, ${\bf C}_i=({\bf A}_i*{\bf v}_i, \langle {\bf A}_i, {{\bf b}_i}\rangle)$.
	We begin by proving this statement for the base case 	$k=0$.
	Since for each $i\in [0, t)$, $y_i = \langle \mathbf{A}, \mathbf{b}^{(i)} \rangle$ and $r_i$ is computed as in algorithm $\sf FC.BOpen$, we obtain
	$$
	\prod_{i\in [0, t)} y_i^{r_i}
	= \prod_{i\in [0, t)} \langle \mathbf{A}, \mathbf{b}^{(i)} \rangle^{r_i}
	= \left\langle \mathbf{A}, \sum_{i\in [0, t)} r_i\mathbf{b}^{(i)} \right\rangle.
	$$
	By equations $\eqref{eqn:b}$ and $\eqref{eq:comy}$, it follows that
	$y = \langle \mathbf{A}, \mathbf{b} \rangle.$
	Hence, it is evident that
	${\bf C}_0=({C}, y)=({\bf A}_0*{\bf v}_0, \langle {\bf A}_0, {{\bf b}_0}\rangle),$
	as required.
		
	By mathematical induction, it remains to show   
	the statement is true for $k=j$ when it is true for $k=j-1$. 
	For simplicity, we denote $\bar{{\bf A}}={\bf A}_{j-1},  \bar{\bf v}={\bf v}_{j-1}, $ and $\bar{\bf b}={\bf b}_{j-1}$.
	According to the algorithm ${\sf FC.BVerify}$, ${\bf C}_{j}={\bf L}_{j}^{x_{j}}\circ {\bf C}_{j-1} \circ ({\bf R}_{j})^{1/x_j}$. 
	By Eq. \eqref{eq:folding} and the induction hypothesis, 
	{\small
		\begin{align*}
		{\bf C}_{j}&=(\bar{{\bf A}}_R*\bar{{\bf v}}_L, \langle \bar{{\bf A}}_R, {\bar{{\bf b}}_L}\rangle)^{x_{j}} 
		\circ (\bar{{\bf A}}*\bar{\bf v}, \langle\bar{{\bf A}}, \bar{\bf b}\rangle)\circ (\bar{{\bf A}}_L*\bar{\bf v}_R, \langle\bar{{\bf A}}_L, {\bar{\bf b}_R}\rangle)^{1/x_j} \\
		&=
		((\bar{{\bf A}}_R^{x_{j}}*\bar{\bf v}_L)\cdot (\bar{{\bf A}}*\bar{\bf v}) \cdot (\bar{{\bf A}}_L^{1/x_j}*\bar{\bf v}_R), 
		\langle \bar{{\bf A}}_R^{x_{j}}, {\bar{\bf b}_L}\rangle \cdot \langle \bar{{\bf A}}, \bar{\bf b}\rangle \cdot \langle\bar{{\bf A}}_L^{1/x_j}, {\bar{\bf b}_R}\rangle).
		\end{align*}}\ignorespaces
	Since $\bar{{\bf A}}*\bar{\bf v}=(\bar{{\bf A}}_L*\bar{\bf v}_L) \cdot (\bar{{\bf A}}_R*\bar{\bf v}_R)$ and $\langle \bar{{\bf A}}, \bar{\bf b}\rangle=\langle \bar{{\bf A}}_L, \bar{\bf b}_L\rangle \cdot \langle \bar{{\bf A}}_R, \bar{\bf b}_R\rangle$, we have 
	%merge terms by $\bar{{\bf v}}_L$, $\bar{\bf v}_R$, $\bar{\bf b}_L$ and $\bar{\bf b}_R$,
	{\small
		$${\bf C}_{j}=(((\bar{{\bf A}}_R^{x_{j}} \circ \bar{{\bf A}}_L)*\bar{\bf v}_L) \cdot ((\bar{{\bf A}}_R \circ \bar{{\bf A}}_L^{1/x_j})*\bar{\bf v}_R), 
		\langle \bar{{\bf A}}_R^{x_{j}}\circ\bar{{\bf A}}_L, {\bar{\bf b}_L}\rangle \cdot \langle \bar{{\bf A}}_R\circ\bar{{\bf A}}_L^{1/x_j}, \bar{\bf b}_R\rangle).$$}\ignorespaces
	Since $\bar{{\bf A}}_R \circ \bar{{\bf A}}_L^{1/x_j}=
	(\bar{{\bf A}}_R^{x_{j}} \circ \bar{{\bf A}}_L)^{1/x_j}$, by replacing the left-hand side of this equality with the right-hand side in ${\bf C}_{j}$, we have	
	$${\bf C}_{j}=
	((\bar{{\bf A}}_R^{x_{j}} \circ \bar{{\bf A}}_L)*(\bar{\bf v}_L\circ \bar{\bf v}_R^{1/x_j}), \langle \bar{{\bf A}}_R^{x_{j}}\circ\bar{{\bf A}}_L, {\bar{\bf b}_L}+{1/x_j} \cdot \bar{\bf b}_R\rangle).	
	$$
	According to Eq. \eqref{eq:folding}, ${\bf C}_{j}=({\bf A}_{j}*{\bf v}_{j}, \langle {\bf A}_{j}, {{\bf b}_{j}}\rangle).$
	\qed	
\end{proof}

\vspace{2mm}
\noindent
{\bf Faster verification.}
The FC scheme uses the same approach as the known-exponent MIPA in B{\"u}nz et al. \cite{BMM+21}, 
which allows the verifier to outsource the computation of ${\bf v}_\ell$ to the untrusted prover and reduces verification time.
However, this comes at the cost of additionally relying on the $n$-SDH assumption ({\bf Assumption} \ref{asp:qSDH}). 
We implicitly assume the faster verifier.

\vspace{2mm}
\noindent
{\bf Efficiency.}
The efficiency of the FC scheme with faster verification is analyzed below and summarized in Table~\ref{tb:fcEff}. 
In our analysis of group operations, we focus on pairings and exponentiations, ignoring the cheaper multiplications.

\begin{itemize}
	%\item
	%{\em Public parameter.}
	%The original public parameter consists of $n$ elements in $\mathbb{G}_2$. 
	%To enable faster verification, $n$ additional elements in $\mathbb{G}_2$ were included.
	
	\item
	{\em Proof size.}
	The original proof contains $2\log n$ elements in $\mathbb{G}_{\rm T}$ and $2\log n+1$ elements in $\mathbb{G}_1$.
	To enable faster verification, two additional elements in $\mathbb{G}_2$ (i.e., ${\bf v}_{\ell}$ and its proof) were included.	
	Hence, the proof size is $\mathcal{O}(\log n)$.
	
	\item
	%Generating the public parameter requires $2n$ exponentiations in $\mathbb{G}_2$.
	{\em Commit.}
	Committing to an $n$-sized vector requires $n$ pairings. 
	
	\item
	{\em Batch opening.}
	Computing vector ${\bf b}$ requires $tn$ field operations.
	Computing $\{{\bf L}_j, {\bf R}_j\}_{j=1}^{\ell}$ requires $2n$ pairings and $2n$ exponentiations in $\mathbb{G}_1$.
	Rescaling ${\bf A}, {\bf v}, {\bf b}$ requires $n$ exponentiations in $\mathbb{G}_1$ and $\mathbb{G}_2$, $n$ field operations.	
	Moreover, to achieve faster verification, the proof of ${\bf v}_{\ell}$ must be computed, and this computation requires performing $2n$ exponentiations in $\mathbb{G}_2$.
	
	\item
	{\em Batch verification.}
	Computing ${\bf b}$ requires $tn$ field operations, and computing $y$ requires $t$ exponentiations in $\mathbb{G}_1$.
	Rescaling the commitments and ${\bf b}$ requires $2\log n$ exponentiations in $\mathbb{G}_{\rm T}$ and $\mathbb{G}_1$, and $n$ field operations.
	The final check equation requires 1 pairing and an exponentiation in $\mathbb{G}_1$.
	With faster verification, the verifier no longer needs to compute ${\bf v}_l$, but needs to verify its correctness; this requires two pairings and $\log n$ field operations.
\end{itemize}

\begin{table}[htbp]
	\centering
	\caption{The efficiency of our FC scheme with faster verification.
	}
	\begin{tabular}{c|c c c}
		\hline
		{\bf Algorithms}&  \quad{\bf  Field operations } \quad  &   {\bf \quad Exponentiations \quad}  &  {\bf \quad Pairings \quad}  \\
		\hline
		${\sf FC.Commit}$&-&-&$\mathcal{O}(n)$\\
		%\hline
		%${\sf FC.Open}$ & $\mathcal{O}(n)$&$\mathcal{O}(n)$&$\mathcal{O}(n)$\\
		%\hline
		%${\sf FC.Verify}$ &$\mathcal{O}(n)$&$\mathcal{O}(\log n)$&$\mathcal{O}(1)$\\
		\hline
		${\sf FC.BOpen}$&$\mathcal{O}(tn)$ & $\mathcal{O}(n)$&$\mathcal{O}(n)$\\
		\hline
		${\sf FC.BVerify}$&$\mathcal{O}(tn)$&$\mathcal{O}(\log n+t)$&$\mathcal{O}(1)$\\
		\hline
	\end{tabular}
	\label{tb:fcEff}
\end{table}	

\section{A VC Scheme with Linear-Time OpenAll}\label{sec:vc}

{\color{black}
	Based on the   scheme $\sf FC$ from Section \ref{sec:ourFC}, we construct FlexProofs (denoted by $\sf VC$), a VC scheme that achieves linear-time ${\sf VC.OpenAll}$ with a flexible batch size parameter $b$, where  $b$ provides a tradeoff 
	and the running time of ${\sf VC.OpenAll}$ can be continuously reduced by increasing $b$.
	Our scheme is directly compatible with the zkSNARKs based on multi-linear polynomials. 
	
	\subsection{FlexProofs}
	%Introduce VC.Setup
	FlexProofs is structured in two layers. We set $\mu = \sqrt{N}$ and partition the vector $\mathbf{m} \in \mathbb{F}_p^N$ into $\mu$ subvectors of length $\mu$:
	$$
	\mathbf{m} = \bigl(\mathbf{m}_0, \mathbf{m}_1, \dots, \mathbf{m}_{\mu-1}\bigr).
	$$
	In the first layer, each $\mathbf{m}_j$ ($j \in [0,\mu)$) is encoded using its multilinear extension 
	$f_j : \mathbb{F}_p^{\log \mu} \to \mathbb{F}_p$
	and committed via a   commitment scheme $\sf PC$ \cite{PST13}  for multilinear polynomials. In the second layer, the vector of the commitments of all ${\bf m}_j(j\in [0,\mu))$ is committed using the scheme $\sf FC$ from Section \ref{sec:ourFC}.
	Specifically, the algorithms ${\sf VC.Setup}$ and ${\sf VC.Commit}$ in our VC scheme  are as follows.
	\begin{itemize}
		\it
		\item
		${\sf VC.Setup}(1^\lambda, 1^{N})$:
		Generate public parameters for  
		$\sf PC$ and $\sf FC$ by invkoing
		\begin{equation}\label{eq:vcpp}
		{\sf pp}_{PC} \leftarrow {\sf PC.Setup}(1^\lambda, 1, \log \mu), 
		~{\sf pp}_{FC}\leftarrow {\sf FC.Setup}(1^\lambda, 1^\mu).
		\end{equation}
		Output  the public parameters ${\sf pp}=\{{\sf pp}_{PC}, {\sf pp}_{FC}\}$.
		\item
		${\sf VC.Commit}({\bf m})$:
		For each $j\in [0, \mu)$, the subvector ${\bf m}_j$ is encoded using its multilinear extension $f_j$ and then committed as
		\begin{equation}\label{eq:genCi}
		C_j\leftarrow {\sf PC.Commit}(f_j).
		\end{equation}
		For the vector ${\bf C}=(C_0, \ldots, C_{\mu-1})$ of $\mu$ polynomial commitments, generate a functional commitment
		\begin{equation}\label{eq:genC}
		C\leftarrow {\sf FC.Commit}(\bf C).
		\end{equation}
		Output the commitment $C$ and an auxiliary information string ${\sf aux}={\bf C}$.
	\end{itemize}

	Given the commitment process defined by \eqref{eq:genCi} and \eqref{eq:genC}, there is a {\em two-step} method of proving the correctness of 
	${\bf m}[i]$ for each $i\in [0, N)$.
	First, use our FC scheme to prove the correctness of $C_{\lfloor i/\mu\rfloor}$ with respect to $C$, i.e.,
	$C_{\lfloor i/\mu\rfloor}=\langle {\bf C}, {\bf u}_{\lfloor i/\mu\rfloor}\rangle$.
	Second, use the PC scheme to prove the correctness of ${\bf m}[i]$ with respect to $C_{\lfloor i/\mu\rfloor}$, i.e.,
	$f_{\lfloor i/\mu\rfloor}({\sf Bin}(i\bmod \mu))={\bf m}[i]$.
	Since both   steps have time complexity ${\cal O}(\sqrt{N})$, the computational cost of
	the two-step method  is ${\cal O}(\sqrt{N})$. 
	Consequently, proving the correctness of all elements of ${\bf m}$ one after the other 
	with the two-step method requires ${\cal O}(N\sqrt{N})$ time.
	By examining the two-step proving process, we observe that: (1) proving the correctness of $C_j$ with respect to $C$ takes place $ \mu$ times and   can be done only once for all elements in  
	${\bf m}_j$; (2) for all $j\in[0,\mu)$, the $\mu$ processes of proving the correctness
	of all elements in ${\bf m}_j$ with respect to $C_j$ share a similar structure and thus can be combined. 
	%by invoking $\sf PC.HyperEval$ on $f_j$.
	Based on these observations, we propose an {\em OpenAll} protocol, in which the prover can prove the correctness of all elements in ${\bf m}$ in ${\cal O}(N)$ time.
	
	%When proving the correctness of all elements in the vector $\mathbf{m}$, we observe that (1) the proving process for all elements in ${\bf m}_j$ includes proving the correctness of $C_j$ with respect to $C$, and (2) the process of proving the correctness of all elements in ${\bf m}_j$ with respect to $C_j$ follows a same pattern for all $j\in[0, \mu)$. 

	%It is clear that for each $i\in [0, N)$, proving the correctness of $m_i$ (in ${\bf m}_{\lfloor i/\mu \rfloor}$) requires two steps. First, use our FC scheme to prove the correctness of $C_{\lfloor i/\mu \rfloor}$ with respect to $C$; then, use the PC scheme to prove the correctness of $m_i$ with respect to $C_{\lfloor i/\mu \rfloor}$.

	\subsection{Linear-Time OpenAll}
	
	In this section,  we first present {\em OpenAll} as an interactive protocol and then make it non-interactive
	with the Fiat-Shamir variant of \cite{ZXH+22}, yielding the algorithms  $\sf VC.OpenAll$ and $\sf VC.Verify$.
	In the non-interactive version of our VC scheme, the executions of the algorithms   $\sf VC.Commit$ and   $\sf VC.OpenAll$ form a 
	{\em pre-processing} phase and the latter algorithm generates 
	$N$ opening proofs  for any vector $\bf m$ of length $N$. In particular, the
	$N$ 
	opening proofs can be used to directly
	reply any query to  $\sf VC.Open$, which is an algorithm  different from $\sf VC.OpenAll$ 
	and every time generates  an opening proof  for any  {\em single} element of the vector.

	The interactive protocol  {\it OpenAll} is a protocol between a prover holding a vector ${\bf m}$ and $N$ verifiers $\{\mathcal{V}_i\}_{i\in [0, N)}$, where each verifier $\mathcal{V}_{i}$ holds a commitment $C$
	(to $\bf m$) and a value ${m}_{i}$, and verifies that ${m}_{i}$ is the $i$-th element of the vector committed in ${C}$ by interacting with the prover.

	\vspace{2mm}
	\noindent
	{\bf Intuition of our protocol.}
	Our {\it OpenAll} protocol consists of two steps: (1) proving the correctness of $C_j$ with respect to $C$ for all $j \in [0,\mu)$, and (2) combining the $\mu$ processes of proving the correctness of all elements in ${\bf m}_j$ with respect to $C_j$.
	For the first step, if 
	the prover uses the scheme $\sf FC$ from Section \ref{sec:ourFC} to generate a proof for each $C_j$,
	then it needs to   generate $\mu$ proofs and thus performs  $\mathcal{O}(N)$ field operations and $\mathcal{O}(N)$ cryptographic operations (e.g., group exponentiations or pairings) according to Table \ref{tb:fcEff}.
	To reduce this cost, the prover partitions the vector ${\bf C}$ into blocks and uses $\sf FC$ to generate one batch proof per block.
	With a batch size of $b$, the prover only needs to generate $\lceil\mu/b\rceil$ batch proofs, which require $\mathcal{O}(N)$ field operations and $\mathcal{O}(N/b)$ cryptographic operations.
	For the second step, we combine the $\mu$ processes by folding $\mu$ polynomials into a single polynomial \cite{GWC19,BDF+21},
	called {\em folded}  polynomial, which 
	is essentially a linear combination of the 
	$\mu$   polynomials, such that if an evaluation claim at a hypercube point is wrong in one of the  $\mu$ polynomials, then it will be wrong for the folded polynomial with overwhelming probability. For the folded polynomial, we invoke $\sf PC.HyperEval$. 
	
	\begin{figure*}[htbp!]
		\begin{boxedminipage}{\textwidth}
			\captionsetup{labelfont=bf, labelformat=simple, labelsep=colon, name=Protocol}
			\centering
			\caption{{\bf Efficient OpenAll}. We assume a prover holding a vector ${\bf m}\in \mathbb{F}_p^N$, the commitment ${C}$ and an auxiliary information string ${\sf aux}(={\bf C})$. 
				We assume $N$ verifiers $\{\mathcal{V}_{i}\}_{i\in [0, N)}$, where 	 
				each verifier $\mathcal{V}_{i}$ holds the commitment ${C}$ and an element ${m}_{i}$ and wants to verify whether ${m}_{i}$ is the $i$-th element of ${\bf m}$.}		
			\begin{itemize}[leftmargin=*, label=\textbullet]	
				\item
				{\bf Step 1: Prove the correctness of ${\bf C}_k$ for all $k\in [0, \lceil\mu/b\rceil)$.}
				
				\begin{enumerate}[leftmargin=*, itemsep=0pt, topsep=0pt]
					\item[1)]
					For each $k\in [0, \lceil\mu/b \rceil)$,  the prover sets $S_k=[kb, (k+1)b)$,  generates a batch proof 
					\begin{equation}\label{eq:genFCBatchProof}
					\pi_{{\bf C}_k}\leftarrow {\sf FC.BOpen}({C}, {\bf C}, \{{\bf u}_a\}_{a\in S_k}, \{C_a\}_{a\in S_k}),
					\end{equation}
					and sends  $(\mathbf{C}_k,\pi_{{\bf C}_k})$ to the verifiers
					$\{\mathcal{V}_{i}\}_{i \in [k \mu b,(k+1)\mu b)}.$	
					\item[2)]
					The verifier~$\mathcal V_i$ sets $k = \lfloor i / (\mu b) \rfloor$ and $S_k=[kb, (k+1)b)$, and checks the correctness of $\mathbf{C}_k$ by executing
					$${\sf FC.BVerify}({C}, \{{\bf u}_a\}_{a\in S_k}, \{C_a\}_{a\in S_k}, \pi_{{\bf C}_{k}}).$$ 	
				\end{enumerate}	
				
				\item {\bf Step 2: Fold all $\mu$ polynomials $\{f_j\}_{j=0}^{\mu-1}$ into a single polynomial $g^*$ and then invoke $\sf PC.HyperEval$ on the folded polynomial $g^*$.}
				\begin{enumerate}[leftmargin=*, itemsep=0pt, topsep=0pt]
					\item[1)]
					The prover receives random values $\{r_j\}_{j\in[0,\mu)}$ from the verifiers and  computes
					\begin{equation}\label{eq:ProverRand}
					g_j = r_j f_j, \hspace{5mm} 
					D_j = C_j^{r_j}.
					\end{equation}
					For each $i \in [0,N)$, the verifier $\mathcal{V}_i$, holding $m_i$ and $C_{\lfloor i/\mu \rfloor}$, computes
					\begin{equation}\label{eq:VerifierRand}
					{y}_{\lfloor i/\mu \rfloor, (i\bmod \mu)} = r_{\lfloor i/\mu \rfloor}\,m_i,
					\hspace{5mm}
					D_{\lfloor i/\mu \rfloor} = C_{\lfloor i/\mu \rfloor}^{r_{\lfloor i/\mu \rfloor}}
					\end{equation}
					The prover folds the randomized polynomials $\{g_j\}_{j \in [0,\mu)}$ into a single polynomial in a tree-like, bottom-up fashion. At the leaf level, the state of the $j$-th leaf is initialized as
					$\big({\cal P}: \{g_j\},\; \vec{\cal V}_j: \{(y_{j,a}, D_j)\}_{a \in [0,\mu)}\big),$
					where the prover ${\cal P}$ holds only the polynomial $g_j$, and 
					$\vec{\cal V}_j = (\mathcal{V}_{j\mu}, \ldots, \mathcal{V}_{(j+1)\mu-1})$ is the vector of verifiers, with each $\vec{\cal V}_j[a]$ holding ${y}_{j, a}$ claimed to be $g_j({\sf Bin}(a))$ together with the commitment $D_j$.	
					For each internal node $w$, the state
					$
					{\cal S}_w = ({\cal P}: \{g_w\},\; \vec{\cal V}_w: \{(y_{w,a}, D_w)\}_{a \in [0,\mu)})
					$
					is computed from its left child ${\cal S}_{w_l}$ and right child ${\cal S}_{w_r}$ using the {\bf Sub-protocol}.
					This process is repeated recursively up the tree until the root.	
					
					\item[2)]				
					At the root, the final state is 
					$
					{\cal S}^* =
					\big({\cal P}: \{g^*\},\;
					\vec{\cal V}^*: \{(y^*_a, D^*)\}_{a \in [0,\mu)}\big).
					$
					The prover follows the ${\sf PC.HyperEval}$ algorithm to generate $\mu$ proofs $\{\pi^*_a\}_{a\in [0, \mu)}$ proving that each $y_{a}^{*}=g^*({\sf Bin}(a))$, and sends $\pi^*_a$ to verifiers $\{\mathcal{V}_{j\mu+a}\}_{j\in [0, \mu)}$.
					Verifiers $\{\mathcal{V}_{j\mu+a}\}_{j\in [0, \mu)}$ validates by running ${\sf PC.Verify}(D^*, {\sf Bin}(a), y_a^*, \pi^*_a)$.
				\end{enumerate}
			\end{itemize}
		\end{boxedminipage}
	\end{figure*}
	
	\vspace{2mm}
	\noindent
	{\bf Protocol description.}
	Formally   our interactive {\it OpenAll} protocol is shown in {\bf Protocol 1} and    consists of the following two steps.}

\begin{itemize}
	\item
	\textit{\underline{The first step:}}
	The prover partitions the vector ${\bf C}$ into $\lceil \mu / b \rceil$ blocks of size $b$ such that
	$$
	\mathbf{C} = (\mathbf{C}_0, \dots, \mathbf{C}_{\lceil \mu / b \rceil - 1}),
	$$
	where $\mathbf{C}_i = (C_{i b}, \dots, C_{(i+1)  b - 1})$.
	For each $k\in [0, \lceil\mu/b \rceil)$, the prover invokes $\mathsf{FC.BOpen}$ to generate a batch proof for the correctness of $\mathbf{C}_k$, and sends $\mathbf{C}_k$ together with this proof to the verifiers $\{\mathcal{V}_{i}\}_{i \in [k b\mu,(k+1) b \mu)}$.
	These verifiers then validate using $\mathsf{FC.BVerify}$.
	%Therefore, the larger the batch size, the shorter the prover time, but the longer each verifier's proof and verification time.
	%To balance these metrics, we choose $b=\log N$.
	\item
	{\color{black}
		\textit{\underline{The second step:}}
		At this point,   ${\cal V}_i$ assures that $C_{\lfloor i/\mu \rfloor}$ is the ${\lfloor i/\mu \rfloor}$-th element of the vector committed in $C$. It remains to verify that $f_{\lfloor i/\mu \rfloor}({\sf Bin}(i\bmod \mu))={\bf m}[i]$.
		The prover has to prove this identity for all $i\in[0,N)$ or equivalently generate proofs for the evaluations of the $\mu$ polynomials $\{f_j\}_{j=0}^{\mu-1}$ at all boolean hypercube points in $\{0,1\}^{\log \mu}$.
		If the prover  invokes $\sf PC.HyperEval$ on the $\mu$ polynomials one by one, the total time complexity will be ${\cal O}(N \log \sqrt{N})$.
		To be more efficient, we first fold the $\mu$ polynomials  into a single polynomial $g^*$ and then invoke $\sf PC.HyperEval$ on the folded polynomial $g^*$.

		%The prover performs the folding in a tree-structured fashion. 
		%In this way, the prover's workload and each verifier's costs are balanced: the prover’s computation cost is $\mathcal{O}(N)$, while each verifier’s communication and computation cost is $\mathcal{O}(\log \mu)$.
		
		\begin{comment}
		We now briefly discuss how to implement this folding scheme with lower computational complexity and communication complexity.
		At the end of the fold scheme, the verifier $\mathcal{V}_{i}$ should hold
		$$y^*_j = \sum_{k \in [0, \mu)} r_k{m}_{k\mu+j}
		\quad\text{and}\quad
		C^* = \prod_{k \in [0, \mu)} C_k^{r_k},$$
		where $j=i \mod \mu$.
		If the prover sends the collection \(\{m_{k\mu + j}, C_k\}_{k \in [0,\mu) \setminus \{\lfloor i/\mu \rfloor\}}\) directly to each $\mathcal{V}_{i}$, then the prover’s computation cost is $\mathcal{O}(1)$, while each verifier’s communication and computation cost grows to $\mathcal{O}(\mu)$.
		If instead the prover sends the values
		$\sum_{k \in [0,\mu) \setminus \{\lfloor i/\mu \rfloor\}} r_k\,m_{k\mu + j}$
		{and}
		$\prod_{k \in [0,\mu) \setminus \{\lfloor i/\mu \rfloor\}} C_k^{r_k}$
		then the prover’s computation rises to $\mathcal{O}(N\mu)$, but each verifier only incurs constant $\mathcal{O}(1)$ communication and computation cost.
		To strike a balance between the prover's workload and each verifier's costs, the prover performs the folding in a tree‑structured fashion.
		\end{comment}
		Next, for clarity, each verifier step is presented immediately after its corresponding prover step to illustrate all participants’ actions.
		To efficiently compute the folded polynomial $g^*$, the verifiers send
		a set $\{r_j\}_{j\in[0,\mu)}$ of random values to the prover. The prover then uses each  
		random value $r_j$  to compute a randomized polynomial  $g_j = r_j f_j$ and a commitment 
		$D_j = C_j^{r_j}$. Meanwhile, for each $i \in [0,N)$, the verifier $\mathcal{V}_i$, holding $m_i$ and $C_{\lfloor i/\mu \rfloor}$, computes ${y}_{\lfloor i/\mu \rfloor, (i\bmod \mu)} = r_{\lfloor i/\mu \rfloor}\, m_i$ and $D_{\lfloor i/\mu \rfloor} = C_{\lfloor i/\mu \rfloor}^{r_{\lfloor i/\mu \rfloor}}$.
		The prover folds the randomized polynomials $\{g_j\}_{j \in [0,\mu)}$ into a single polynomial in a tree-like, bottom-up fashion. At the leaf level, the state of the $j$-th leaf is initialized as
		$\big({\cal P}: \{g_j\},\; \vec{\cal V}_j: \{(y_{j,a}, D_j)\}_{a \in [0,\mu)}\big),$
		where the prover ${\cal P}$ holds only the polynomial $g_j$, and 
		$\vec{\cal V}_j = (\mathcal{V}_{j\mu}, \ldots, \mathcal{V}_{(j+1)\mu-1})$ is the vector of verifiers, with each $\vec{\cal V}_j[a]$ holding ${y}_{j, a}$ claimed to be $g_j({\sf Bin}(a))$ together with the commitment $D_j$.	
		For each internal node $w$, the state
		$$
		{\cal S}_w = ({\cal P}: \{g_w\},\; \vec{\cal V}_w: \{(y_{w,a}, D_w)\}_{a \in [0,\mu)})
		$$
		is computed from its left child ${\cal S}_{w_l}$ and right child ${\cal S}_{w_r}$ using the {\bf Sub-protocol}.
		This process is repeated recursively up the tree until the root.	}
	
	\vspace{-4mm}
	\begin{figure*}[htbp!]
		\centering
		\begin{boxedminipage}{\textwidth}
			\captionsetup{labelfont=bf, labelformat=simple, labelsep=colon, name=Protocol}
			{\bf Sub-protocol: Computing the parent state from the two child states.} 
			%given two child states ${\cal S}_{w_0}, {\cal S}_{w_1}$, compute the parent state ${\cal S}_v$. 
			For each $z \in \{l,r\}$, the child node $w_z$ is associated with a state  
			$$
			{\cal S}_{w_z} = ({\cal P}: \{g_{w_z}\},\; \vec{\cal V}_{w_z} : \{(y_{w_z,a}, D_{w_z})\}_{a \in [0,\mu)}),
			$$
			where the prover ${\cal P}$ holds the polynomial $g_{w_z}$, and for each $a \in [0,\mu)$, the verifiers $\vec{\cal V}_{w_z}[a]$ hold a point $y_{w_z, a}$ claimed to be $g_{w_z}({\sf Bin}(a))$
			together with the commitment $D_{w_z}$.  
			At the end of the folding, the parent node $w$ has a state  
			$$
			{\cal S}_w = ({\cal P}: \{g_w\},\; \vec{\cal V}_w : \{(y_{w,a}, D_w)\}_{a \in [0,\mu)}),
			$$
			where the prover holds the polynomial $g_w$, and  
			$
			\vec{\cal V}_w = \bigl(\vec{\cal V}_{w_l}[0] \cup \vec{\cal V}_{w_r}[0],\; \dots,\; \vec{\cal V}_{w_l}[\mu-1] \cup \vec{\cal V}_{w_r}[\mu-1]\bigr)
			$
			collects the verifier vectors. For each $a \in [0,\mu)$, both $\vec{\cal V}_{w_l}[a]$ and $\vec{\cal V}_{w_r}[a]$ now hold the same claimed evaluation $y_{w,a}$ and the commitment $D_w$.  		
			Our protocol works as follows.
			\begin{enumerate}[leftmargin=*, itemsep=0pt, topsep=0pt]
				\item
				The prover computes the folded polynomial $g_w({\bf x})=g_{w_l}({\bf x})+g_{w_r}({\bf x})$.	
				
				\item
				For every $\vec{\cal V}_{w_l}[a]$, the prover provides $D_{w_r}$ and $y_{w_r, a}$ and for every $\vec{\cal V}_{w_r}[a]$, the prover provides $D_{w_l}$ and $y_{w_l, a}$. 
				
				\item
				Next, every $\vec{\cal V}_{w}[a]$ computes $y_{w, a}=y_{w_l, a}+y_{w_r, a}$ and $D_w=D_{w_l}D_{w_r}$. 
			\end{enumerate}
		\end{boxedminipage}
	\end{figure*}
	At the root, the final state is 
	$
	{\cal S}^* =
	\big({\cal P}: \{g^*\},\;
	\vec{\cal V}^*: \{(y^*_a, D^*)\}_{a \in [0,\mu)}\big),
	$
	where the prover holds the final folded polynomial $g^*$, and each verifier $\mathcal{V}_{j\mu+a}$ (for $j\in [0,\mu), a\in [0,\mu)$) holds the claimed evaluation $y^*_a = g^*({\sf Bin}(a))$ together with the commitment $D^*$.
	Finally, the prover runs the ${\sf PC.HyperEval}$ algorithm for $g^*$ to generate proofs $\{\pi^*_a\}_{a\in[0,\mu)}$ proving that each $y^*_a = g^*({\sf Bin}(a))$, and sends $\pi^*_a$ to verifiers $\{\mathcal{V}_{j\mu+a}\}_{j\in [0, \mu)}$.
	Verifiers $\{\mathcal{V}_{j\mu+a}\}_{j\in [0, \mu)}$ validates by running ${\sf PC.Verify}(D^*, {\sf Bin}(a), y_a^*, \pi^*_a)$.
\end{itemize}

\noindent
{\bf Making our protocol non-interactive.}
%To establish common randomness in the non-interactive setting, we use the Fiat-Shamir variant described in \cite{ZXH+22}, specifically designed for the multi-verifier setting and proven secure in the random oracle model. 
%Next, we explain the Fiat-Shamir variant at a high level.
%The prover builds the Merkle tree by hashing the invidiual statement at the leafs (together with the common statement) and uses its root as the common random challenge. Then, the proof provided to each verifier includes a membership proof in the Merkle tree to show that its data is accounted for. We refer the reader to \cite{ZXH+22} for additional details.
%For {\it Step 2} in our VC scheme, the prover builds the Merkle tree by hash the coefficients of all to-be-folded polynomials and their commitments.
Up to this point, our protocol has been presented in the interactive setting: in {\em Step 1}, the prover sends a message to the verifiers; in {\em Step 2}, the prover receives their challenges and responds accordingly. To make it non-interactive, we adopt the Fiat-Shamir variant of \cite{ZXH+22}, which is designed for the multi-verifier setting and proven secure in ROM.
Our protocol becomes non-interactive as follows.
For {\it Step 2}, the prover builds a Merkle tree by hashing the coefficients of all polynomials to be folded and their corresponding commitments. Each verifier then receives a proof that includes a Merkle tree membership path, showing that its data is correctly included.
%To establish common randomness in the non-interactive setting, we adopt the Fiat–Shamir variant of \cite{ZXH+22}, designed for the multi-verifier setting and proven secure in ROM. Here, the prover builds a Merkle tree by hashing each individual statement (along with the common statement) at the leaves, and then uses the Merkle root as the common random challenge; each verifier receives a proof that includes a Merkle tree membership path, showing that its data is correctly included. We refer the reader to~\cite{ZXH+22} for further technical details. Based on this, we make our protocol non-interactive as follows. For {\it Step 2} of {\bf Protocol 1}, the prover builds a Merkle tree by hashing the coefficients of all polynomials to be folded and their corresponding commitments.

The correctness and security of FlexProofs are proven in Appendix \ref{app:vc}. 
A zero-knowledge variant of {\em OpenAll} can also be obtained using standard techniques \cite{CFS17,L21} (see Appendix \ref{app:def} for the formal definition). In this case, the PC scheme is replaced with its zero-knowledge version, whereas the FC scheme does not require zero-knowledge since it only operates on already hidden commitments.

\subsection{Efficiency} 
\vspace{2mm}
\noindent
{\bf Efficiency improvements in the first step of {\bf Protocol 1}.}
Note that the working process of the first step in the protocol implies the \emph{first} subvector commitment scheme over group elements. Below, we describe some efficiency improvements for this scheme.
In the FC model, the vector length is denoted by $n$; in our setting this length is $\mu$, so we set $\ell=\log \mu$.
As a starting point, we consider the case where ${\bf b}$ in Eq.~\eqref{eq:folding} is the $i$-th {\em unit vector} ${\bf u}_i$.
Note that ${\sf Bin}(i)=(i_{\ell-1}, \ldots, i_0)$.
In the first folding round:
$$
({\bf u}_i)_1=({\bf u}_i)_L+1/x_1 \cdot({\bf u}_i)_R.
$$
If $i_{\ell-1} = 0$, the only non-zero entry $1$ appears in $({\bf u}_i)_L$, resulting in a single non-zero entry of $1$ in $({\bf u}_i)_1$.  
Conversely, if $i_{\ell-1} = 1$, the only non-zero $1$ entry appears in $({\bf u}_i)_R$, which leads to a single non-zero entry of $1/x_1$ in $({\bf u}_i)_1$.  
In summary, the only non-zero entry in $({\bf u}_i)_1$ is $(1/x_1)^{i_{\ell-1}}$, located at the index $(i_{\ell-2}, \ldots, i_0)$.
In general, at the $j$-th round, the only non-zero entry in $({\bf u}_i)_{j}$ is
$
\prod_{k=1}^{j} \left( {1}/{x_k} \right)^{i_{\ell-k}},
$
with index $(i_{\ell-j-1}, \ldots, i_0)$. 
Hence, after $\ell$ rounds, we have:
\begin{equation}\label{eq:b_l}
({\bf u}_i)_\ell = \prod_{k=1}^{\ell} \left({1}/{x_k} \right)^{i_{\ell-k}}.
\end{equation}
Let $I \subseteq [0, \mu)$ be a set of indices. For batch opening over $I$, ${\bf b}=\sum_{i\in I}r_i{\bf u}_i$.
Let $\hat{{\bf u}}_i=r_i{\bf u}_i(i\in I)$, thus
${\bf b}=\sum_{i\in I}\hat{{\bf u}}_i.$
According to the folding of ${\bf b}$ in Eq. \eqref{eq:folding}, at the first round:
${\bf b}_1={\bf b}_L+1/x_1{\bf b}_R=(\sum_{i\in I}\hat{{\bf u}}_i)_L+1/x_1(\sum_{i\in I}\hat{{\bf u}}_i)_R=\sum_{i\in I}(\hat{{\bf u}}_i)_L+1/x_1(\hat{{\bf u}}_i)_R=\sum_{i\in I}(\hat{{\bf u}}_i)_1;$
at the $j$-th round:
$
{\bf b}_j=\sum_{i\in I}((\hat{{\bf u}}_i)_{j-1})_L+1/x_j((\hat{{\bf u}}_i)_{j-1})_R
=\sum_{i\in I}(\hat{{\bf u}}_i)_j.
$
After $\ell$ rounds and by Eq.~\eqref{eq:b_l}:
$${\bf b}_\ell=\sum_{i\in I}(\hat{{\bf u}}_i)_\ell \text{ and }
(\hat{{\bf u}}_i)_\ell = r_i\prod_{k=1}^{\ell}(1/x_k)^{i_{\ell-k}}.$$
Therefore, ${\sf FC.BOpen}$ can run faster, and ${\sf FC.BVerify}$ can run in  $\mathcal{O}(|I| \log \mu)$ time.

\vspace{2mm}
\noindent
{\bf Complexity analysis of FlexProofs.}
The underlying homomorphic PC scheme for multi-linear polynomials achieves commitment and evaluation complexity of $\mathcal{O}(N)$, proof size of $\mathcal{O}(\log N)$, verification time of $\mathcal{O}(\log N)$, and $\sf PC.HyperEval$ complexity of $\mathcal{O}(N\log N)$.
Next, we analyze the complexity of FlexProofs.

\begin{itemize}
	\item
	{\em Commitment complexity.}
	Algorithm $\mathsf{VC.Commit}$ makes $\mu$ calls to $\mathsf{PC.Commit}$, each on a polynomial of size $\mu$, and then makes one call to $\mathsf{FC.Commit}$ on a vector of size $\mu$.
	Due to the linear commit complexity of both the PC scheme and our FC scheme, the overall commitment complexity is $\mathcal{O}(N)$.
	
	\item
	{\em Prover complexity.}
	In {\color{black}{\it Step 1}}, the prover creates proofs for $\lceil \mu / b \rceil$ batches, each containing $b$ elements. According to Table~\ref{tb:fcEff}, this step takes $\mathcal{O}(N)$ field operations and $\mathcal{O}(N / b)$ cryptographic operations.
	In {\color{black}{\it Step 2}}, the prover has to (1) randomize polynomials and their commitments, (2) fold polynomials in a tree-like structure, (3) call $\sf PC.HyperEval$.
	Due to the underlying PC scheme achieves $\sf PC.HyperEval$ complexity of $\mathcal{O}(N\log N)$, both (1) and (2) require $\mathcal{O}(N)$ field operations and $\mathcal{O}(\mu)$ cryptographic operations, (3) has a complexity of $\mathcal{O}(\sqrt{N}\log N)$.
	{\color{black}Overall, the prover runs in time $\mathcal{O}(N)$, with ${\cal O}(N)$ field operations and ${\cal O}(N/b + \sqrt{N}\log N)$ cryptographic operations.}
	
	\item
	{\em Proof size.}
	In {\it Step 1}, each ${\cal V}_i$ receives ${\bf C}_{\lfloor i/\mu b \rfloor}$ of size ${\cal O}(b)$ and the batch proof $\pi_{{\bf C}_{\lfloor i/\mu b \rfloor}}$ of size $\mathcal{O}(\log N)$.
	In {\it Step 2}, it receives $\log N$ commitments, $\log N$ claimed evaluations and an opening proof of the PC scheme. 
	Since the PC scheme has a proof size of ${\cal O}(\log N)$, the proof size would be ${\cal O}(\log N+b)$ for the interactive version and ${\cal O}(\log N+b)$ for the non-interactive version.
	
	\item
	{\em Verification time.}
	In {\it Step 1}, each ${\cal V}_i$ runs the $\sf FC.BVerify$ algorithm to verify the correctness of ${\bf C}_{\lfloor i/\mu b \rfloor}$, which takes ${\cal O}(b\log \mu)$ time.
	In {\it Step 2}, it needs to (1) compute the final evaluation and commitment, and (2) verify the final evaluation. 
	Since the underlying PC scheme has a verification time of ${\cal O}(\log N)$, the complexities of (1) and (2) are $\mathcal{O}(\log \mu)$ and ${\cal O}(\log N)$, respectively.
	Thus, the verification time would be $\mathcal{O}(b\log N)$ for both the interactive and non-interactive versions.
\end{itemize}
In conclusion, FlexProofs is a VC scheme with $\mathcal{O}(N)$ $\sf Commit$ and ${\sf OpenAll}$ complexity, $\mathcal{O}(\log N+b)$ proof size and $\mathcal{O}(b\log N)$ verification time.

\subsection{Compatible with a family of zkSNARKs}
Paper \cite{PPP25} proposes a construction that combines a VC scheme with a zkSNARK in order to prove the correctness of computations over data originating from multiple sources. To adapt FlexProofs to this construction, we first extend our {\em OpenAll} for sub-arrays and then show how FlexProofs is directly compatible with a family of zkSNARKs.

\vspace{2mm}
\noindent
{\bf Extending {\em OpenAll} for sub-arrays.}
The above {\em OpenAll} protocol proves the correctness of all individual elements of ${\bf m}$. We now extend this to proving the correctness of all {\em consecutive sub-arrays}. More formally, for each verifier ${\cal V}_j$ with $j \in [0,M)$, we aim to prove that its array ${\bf m}_j \in \mathbb{F}^{N/M}$ is indeed the $j$-th sub-array of the committed vector ${\bf m} \in \mathbb{F}^N$. This can be achieved by showing that the commitment $C_j$ of ${\bf m}_j$ is the $j$-th element of the committed vector ${\bf C}$, as established in {\em Step 1} of {\em OpenAll}. 
To generate all opening proofs in this setting, all parties follow the ${\sf VC.Commit}$ algorithm and {\em Step 1}, with the following modifications. In the ${\sf VC.Commit}$ algorithm, partition ${\bf m}$ into $M$ segments of size $N/M$ such that ${\bf m}=({\bf m}_0,\ldots,{\bf m}_{M-1})$; for each ${\bf m}_j$, compute its multi-linear extension commitment $C_j$; then compute the commitment of the vector ${\bf C}=(C_0,\ldots,C_{M-1})$. Upon receiving $C_j$, verifier ${\cal V}_j$ checks whether the polynomial committed in $C_j$ corresponds to the multi-linear extension of ${\bf m}_j$.

\vspace{2mm}
\noindent
{\bf Compatibility with a family of zkSNARKs.}
According to~\cite{PPP25}, if a VC scheme encodes the committed vector ${\bf m}$ as a multi-linear polynomial and supports generation of an evaluation proof for the polynomial at a random point, it is directly compatible with a family of zkSNARKs that encode inputs as multi-linear polynomials \cite{XZZ+19,S20,CBB+23}. 
Since FlexProofs is built on PC and FC schemes, it can be viewed as encoding the committed vector ${\bf m}$ into a multi-linear polynomial and supports generating an evaluation proof for the committed polynomial at a random point. 
We next show how FlexProofs produces such a proof.
By Eq. \eqref{eq:multi-extension}, the multi-linear extension of a vector ${\bf m}\in \mathbb{F}_p^N$ is 
\begin{equation}\label{eq:f_m}
f_{\mathbf{m}}(\mathbf{x}) =
\sum_{i\in [0, N)} {\bf m}[i]  \prod_{k\in [0, \log N)} \left( i_k x_k  + (1 - i_k)(1 - x_k) \right), 
\end{equation}
where $\mathbf{x} = (x_{\log N - 1}, \ldots, x_0)$ and ${\sf Bin}(i) = (i_{\log N - 1}, \ldots, i_0)$;
and for $j\in [0, \mu)$, the multi-linear extension of ${\bf m}_j\in \mathbb{F}_p^{\mu}$ is 
\begin{equation}\label{eq:f_j}
	f_j({\bf x}_R)=\sum_{a\in [0, \mu)} {\bf m}_j[a] \prod_{k\in [0, \log \mu)} \left( a_k x_k  + (1 - a_k)(1 - x_k) \right),
\end{equation}
where ${\sf Bin}(a)=(a_{\log \mu-1}, \ldots, a_0)$.
From Eq. \eqref{eq:f_m} and \eqref{eq:f_j}, we have 
\begin{equation}\label{eq:fANDfj}
f_{\bf m}({\bf x})=\sum\nolimits_{j\in [0, \mu)}f_j({\bf x}_R)\mathcal{T}_{j, \log \mu}({\bf x}_L),
\end{equation}
where $\mathcal{T}_{j, \log \mu}({\bf x}_L)=\prod_{a\in [0, \log \mu)}(j_ax_{a+\log \mu}+(1-j_a)(1-x_{a+\log \mu}))$
and ${\sf Bin}(j)=(j_{\log \mu-1}, \ldots, j_0)$.
To prove the evaluation of $f_{\bf m}({\bf x})$ at ${\bf r}=(r_{\log N-1}, \ldots, r_0)\in \mathbb{F}_p^{\log N}$, it suffices to prove the evaluation of $f_{\bf m}({\bf r}_L, {\bf x}_R)$ at ${\bf r}_R$. 
Let us first define 
$F({\bf x}_R)=f_{\bf m}({\bf r}_L, {\bf x}_R);$
using Eq. \eqref{eq:fANDfj},  we then obtain the expression: 
$$F({\bf x}_R)=\sum\nolimits_{j\in [0, \mu)}f_j({\bf x}_R)\mathcal{T}_{j, \log \mu}({\bf r}_L).$$
Thus, the original problem reduces to proving the evaluation of $F({\bf x}_R)$ at ${\bf r}_R$, which can be accomplished through the following steps:
\begin{enumerate}
	\item Compute the commitment to $F(\mathbf{x}_R)$ as $C_F=\prod_{j\in [0, \mu)} C_j^{\mathcal{T}_{j, \log \mu}({\bf r}_L)}$.
	\item Prove the correctness of $C_F$ using our FC scheme as $C_F$ can be expressed as $\langle {\bf C}, (\mathcal{T}_{0, \log \mu}({\bf r}_L), \ldots, \mathcal{T}_{\mu-1, \log \mu}({\bf r}_L))\rangle$, where ${\bf C}$ is already committed in ${C}$.
	\item Generate a proof for the evaluation of $F$ at ${\bf r}_R$ with the PC scheme.
\end{enumerate}
Therefore, to prove the evaluation of $f({\bf x})$ at a random point ${\bf r}$, the prover sends $C_F$ along with proofs generated by the FC and PC schemes to the verifier, who then checks both proofs.
The above process can be adapted into a zero-knowledge version via standard techniques \cite{CFS17,L21}. Here, the underlying PC is replaced with its zero-knowledge variant, while the FC component does not require zero-knowledge, as it only operates on hidden commitments.

\section{Performance Evaluation}
In this section, we implement the non-interactive variants of FlexProofs \footnote{https://github.com/FlexProofs} in the single-thread and evaluate their performance. 
For field and group operations, we use the mcl library and choose BN\_SNARK1 elliptic curve to achieve approximately 100 bits of security.
%A $\mathbb{G}_1$, $\mathbb{G}_2$ and $\mathbb{G}_T$ element takes 32, 64 and 384 bytes, respectively.
Our codes are run on an Intel(R) Xeon(R) E-2286G CPU @ 4.0GHz with 6 cores and 64GB memory. 
Unless otherwise stated, we run each experiment 3 times and report their average.
%We run each experiment 3 times and report their average.

\subsection{Evaluation of our FC scheme}
We evaluate our FC scheme for $n\in \{2^{8}, 2^{9}, 2^{10}, 2^{11}, 2^{12}\}$, and show the results in Table \ref{tb:fc}.

\vspace{-8mm}
\renewcommand{\arraystretch}{1.1}
\begin{table}[htbp]
	\centering
	\caption{Single-threaded running time of our FC scheme.}
	\begin{tabular}{c| C{1cm} C{1cm} C{1cm} C{1cm} C{1cm}}
		\hline
		$n$& {$2^{8}$} & {$2^{9}$} & {$2^{10}$} & {$2^{11}$} & {$2^{12}$} \\
		\hline
		|$\pi_y$| (KiB)&6.66&7.47&8.28&9.09&9.91\\
		%\hline
		%Proof size (KiB)&&&&&\\
		\hline
		${\sf FC.Commit}$ (s)&0.03&0.07&0.14&0.27&0.54\\
		\hline
		${\sf FC.BOpen}(t=1)$ (s)&0.12&0.23&0.45&0.88&1.70\\
		\hline
		${\sf FC. BVerify}(t=1)$ (s)&0.004&0.005&0.005&0.006&0.007\\
		\hline
		${\sf FC.BOpen}(t=32)$ (s)&0.13&0.23&0.46&0.89&1.73\\
		\hline
		${\sf FC. BVerify}(t=32)$ (s)&0.007&0.01&0.02&0.03&0.05\\
		\hline
	\end{tabular}
	\label{tb:fc}
\end{table}
\vspace{-8mm}

\vspace{2mm}
\noindent
{\bf Proof size.}
The proof size grows with $\log n$, and is about 9.91\,KiB for $n = 2^{12}$.

\vspace{2mm}
\noindent
{\bf Commit.}
As shown in Table \ref{tb:fc}, the time costs of ${\sf FC.Commit}$ are roughly proportional to $n$.
For $n=2^{12}$, committing needs $0.54$s.

\vspace{2mm}
\noindent
{\bf Batch opening.}
For $n = 2^{12}$, the running time of ${\sf FC.BOpen}(t = 32)$ is approximately 1.73\,s, which is only 3\% of the total time required to run ${\sf FC.BOpen}(t = 1)$ 32 times. 
{\color{black}
Figure \ref{fig:BatchOpening} compares the running time of a single ${\sf FC.BOpen}$ call with batch size $t$ (denoted ${\sf FC.BOpen}$-$t$) against $t$ sequential calls with batch size 1 (denoted $t\times {\sf FC.BOpen}$-$1$), for $n = 2^{12}$ and $t \in \{16, 32, 64, 128, 256, 512\}$.}
And it shows that the efficiency gain of batching increases with the batch size $t$.

\vspace{2mm}
\noindent
{\bf Batch verification.}
For $n = 2^{12}$, the running time of ${\sf FC.BVerify}(t = 32)$ is approximately 0.05\,s, which is only 22\% of the total time required to run ${\sf FC.BVerify}(t = 1)$ 32 times. This demonstrates the high efficiency of our proposed batching technique.
{\color{black}
Figure \ref{fig:BatchVerification} compares the running time of a single ${\sf FC.BVerify}$ call with batch size $t$ (denoted ${\sf FC.BVerify}$-$t$) against $t$ sequential calls with batch size 1 (denoted $t\times {\sf FC.BVerify}$-$1$), for $n = 2^{12}$ and $t \in \{16, 32, 64, 128, 256, 512\}$.}

\begin{figure}[htbp]
	\centering
	\begin{minipage}[t]{0.48\textwidth}
		\includegraphics[width=\linewidth]{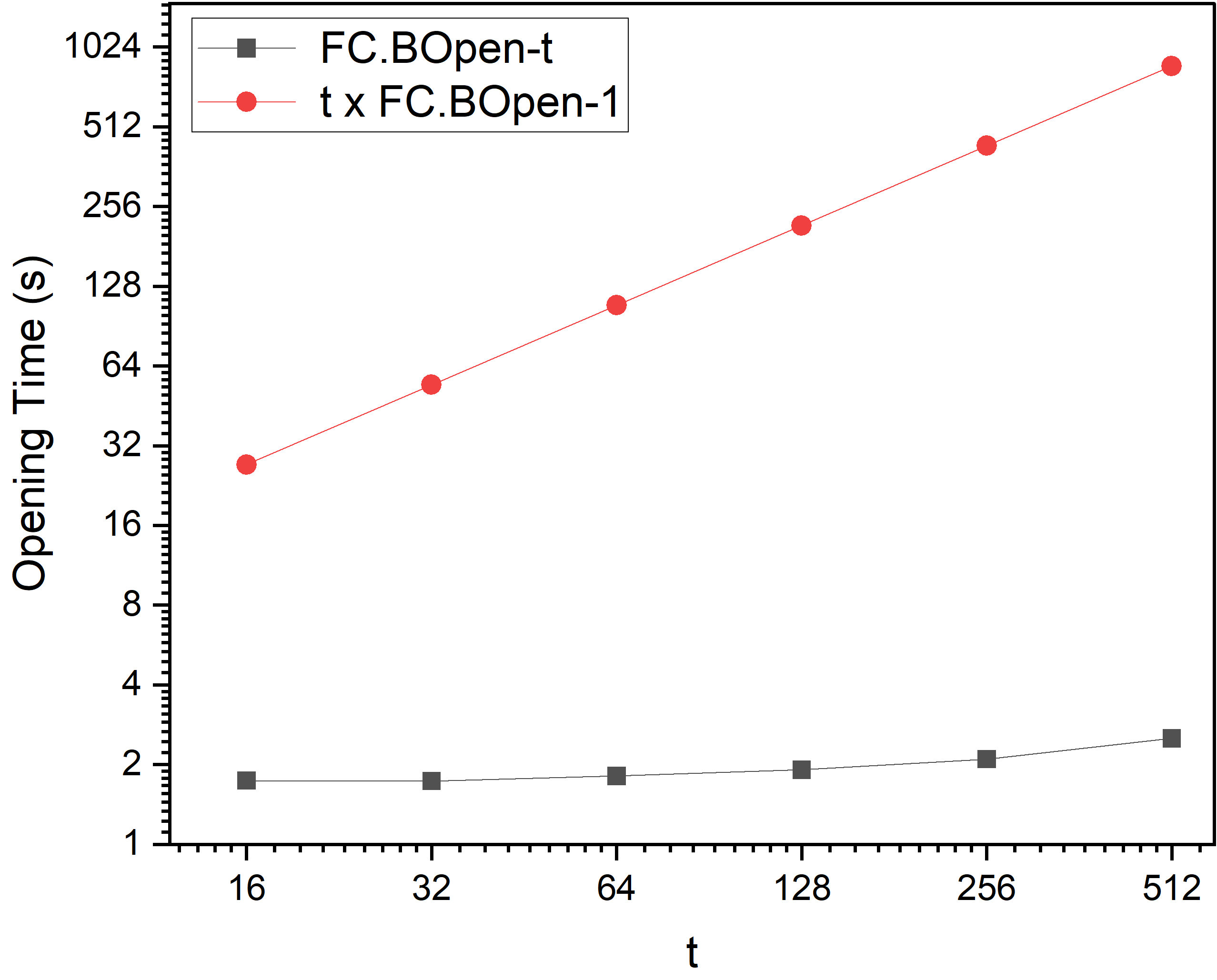}
		\caption{\color{black} Opening time. Both axes use logarithmic scales.}%(${\sf FC.BOpen}$-$t$ denotes a single call to ${\sf FC.BOpen}$ with batch size $t$; $t\times {\sf FC.BOpen}$-$1$ denotes $t$ sequential calls, each with batch size 1.)
		\label{fig:BatchOpening}
	\end{minipage}
	\hfill 
	\begin{minipage}[t]{0.48\textwidth}
		\includegraphics[width=\linewidth]{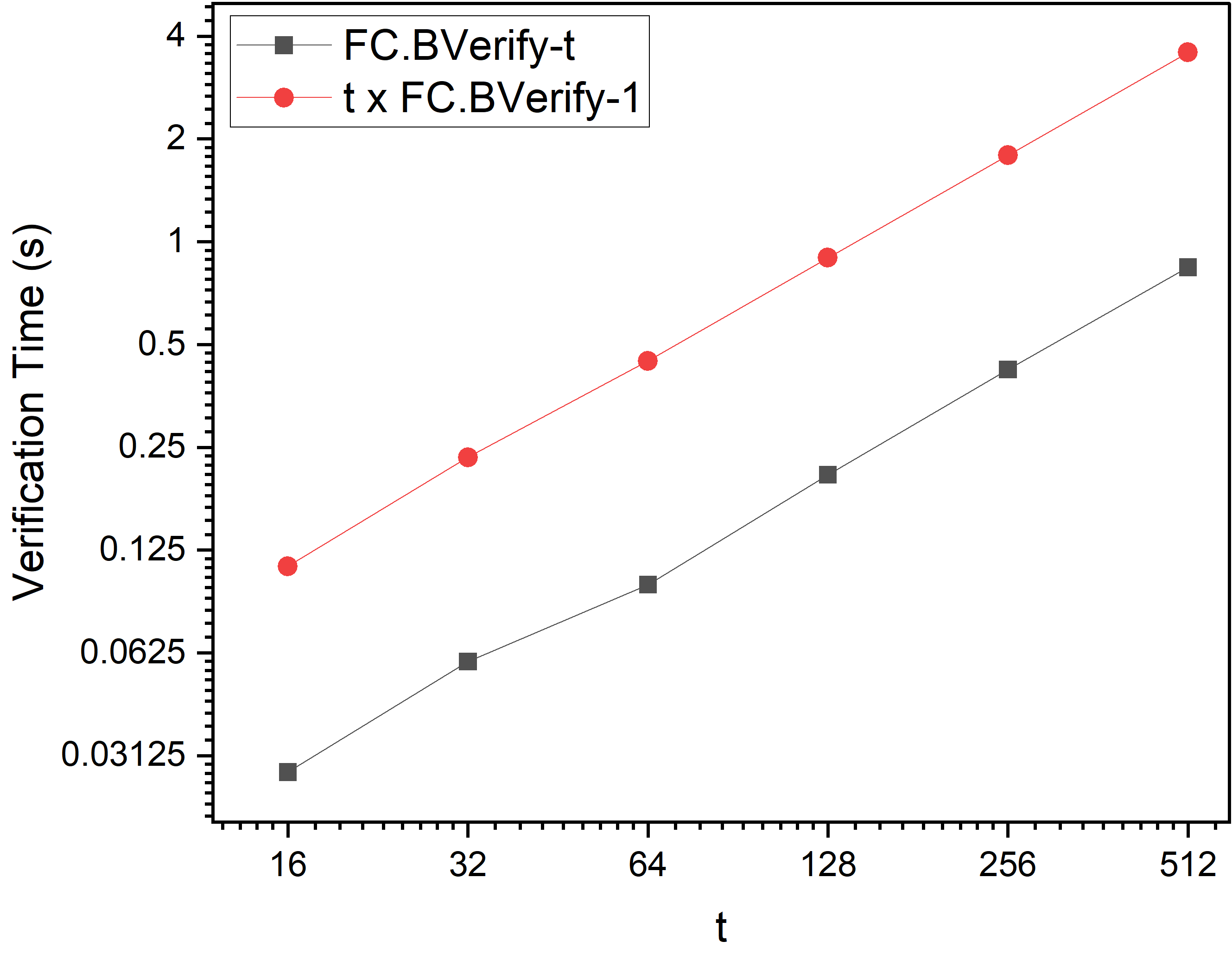}
		\caption{\color{black} Verification time. Both axes use logarithmic scales.}%(${\sf FC.BVerify}$-$t$ denotes a single call to ${\sf FC.BVerify}$ with batch size $t$; $t\times {\sf FC.BVerify}$-$1$ denotes $t$ sequential calls, each with batch size 1.) 
		\label{fig:BatchVerification}
	\end{minipage}
\end{figure}

\vspace{-10mm}
\subsection{Evaluation of FlexProofs and comparison with HydraProofs}
We microbenchmark FlexProofs with batch sizes $b \in \{2\log N, \log^{2} N\}$. 
We also compared it with HydraProofs \cite{PPP25}, the state-of-the-art vector commitment scheme. 
HydraProofs is the only known VC construction that achieves ${\cal O}(N)$ time to generate all opening proofs for a vector of size $N$, while also being directly compatible with a family of zkSNARKs. 
For fairness, we implement both schemes in C++ using
the same elliptic curve and techniques for $N\in \{2^{16}, 2^{18}, 2^{20}, 2^{22}, 2^{24}\}$. The experimental results are shown in Table \ref{tb:vc}.

\begin{table}[htbp]
	\small
	\centering
	\caption{The comparison between FlexProofs (FP) and HydraProofs.}
	\begin{tabular}{c| c | C{1cm}C{1cm}C{1cm}C{1cm}C{1cm}}
		\hline
		\multicolumn{2}{c|}{$N$} & {$2^{16}$} & {$2^{18}$} & {$2^{20}$} & {$2^{22}$} & {$2^{24}$}\\
		\hline
		Commit (s)
		&FP ($b=2\log N$) &0.83&3.02&8.81&31.87&116.61\\
		&FP ($b=\log^2 N$) &0.82&3.01&8.70&31.53&116.41\\
		&HydraProofs&0.82&3.01&8.66&31.71&116.64 \\
		\hline
		{Compute all}  
		&FP ($b=2\log N$) & 1.03 &3.58&12.17&43.64&159.30\\
		{proofs} (s)
		&FP ($b=\log^2 N$) & {\bf 0.23} &{\bf 0.82}&{\bf 2.61}&{\bf 9.24}&{\bf 32.71}\\
		&HydraProofs& 1.42 &4.82&17.15&58.66&210.07\\
		\hline
		{Verify a}
		&FP ($b=2\log N$) &{\bf 0.006}&{\bf 0.007}&{\bf 0.008}&{\bf 0.009}&{\bf 0.011}\\
		{proof} (s)
		&FP ($b=\log^2 N$) & 0.01 &0.012&0.013&0.015&0.017\\
		&HydraProofs &0.01&0.011&0.012&0.013&0.014\\
		\hline
		{Proof size}
		&FP ($b=2\log N$)& 8.91 & 10 & 11.09 & 12.19 & 13.28\\
		(KiB)&FP ($b=\log^2 N$) & 15.91 &19&22.34&25.94&29.78\\
		&HydraProofs& {\bf 5.53} & {\bf 6.63} & {\bf 7.81} & {\bf 9.09} & {\bf 10.47} \\
		\hline
	\end{tabular}
	\label{tb:vc}
\end{table}

\vspace{2mm}
\noindent
{\bf Commit.}
In FlexProofs, the commitment time grows roughly linearly with $N$, taking about $0.82$ seconds when $N=2^{16}$. 
HydraProofs shows a similar linear growth in commitment time.

\vspace{2mm}
\noindent
{\bf Computing all proofs.}
{\color{black}In FlexProofs, computing all proofs requires ${\cal O}(N)$ field operations and ${\cal O}(N/b + \sqrt{N}\log N)$ cryptographic operations, so the proving time decreases as $b$ grows.
Experiments confirm this: when $N=2^{16}$, computing all proofs takes $1.03$ seconds for $b=2\log N$, 
but only $0.23$ seconds for $b=\log^2 N$. 
In HydraProofs, computing all proofs requires ${\cal O}(N)$ field operations and ${\cal O}(N)$ cryptographic operations.
Thus, FlexProofs requires fewer cryptographic operations as $b$ increases. Consequently, when $N=2^{16}$, FlexProofs is about $1.3\times$ faster than HydraProofs with $b=2\log N$, and about $6\times$ faster with $b=\log^2 N$.}

%Although computing all proofs in FlexProofs has overall complexity ${\cal O}(N)$, a larger batch size $b$ reduces the number of cryptographic operations and thus lowers the proving time. 
%In HydraProofs, the proving time also has complexity ${\cal O}(N)$. However, while both schemes share the same asymptotic complexity, FlexProofs requires fewer cryptographic operations as $b$ increases. 

\vspace{2mm}
\noindent
{\bf Verifying a proof.}
When $b=2\log N$, the verification complexity in FlexProofs is ${\cal O}(\log^2 N)$, 
whereas for $b=\log^2 N$ it increases to ${\cal O}(\log^3 N)$. 
Experimental results confirm this: when $N=2^{16}$, verifying a proof takes $0.006$ seconds for $b=2\log N$, and $0.01$ seconds for $b=\log^2 N$. 
In HydraProofs, the verification complexity is also ${\cal O}(\log^2 N)$. 
And the experimental results show that when $b=2\log N$, FlexProofs verifies proofs slightly faster than HydraProofs.

\vspace{2mm}
\noindent
{\bf Proof size.}
When $b=2\log N$, the proof size in FlexProofs is ${\cal O}(\log N)$, 
while for $b=\log^2 N$ it grows to ${\cal O}(\log^2 N)$. 
In HydraProofs, the proof size has complexity ${\cal O}(\log^2 N)$. 
For the values of $N$ shown in Table~\ref{tb:vc}, HydraProofs yields smaller proofs; 
for instance, when $N=2^{16}$ it requires only $5.53$ KiB compared to $8.91$ KiB in our scheme. 
This gap arises because asymptotic bounds capture only the growth trend as $N$ increases, 
while for moderate $N$ the hidden constants dominate the concrete size. 
As $N$ becomes larger, the slower asymptotic growth of FlexProofs is expected to eventually result in smaller proofs.

%\vspace{2mm}
%\noindent
%{\bf Proof size.}
%The proof size would be $\mathcal{O}(\log N)$.

\section{Conclusions}
We first propose an FC scheme for multi-exponentiations with batch opening. Based on this, we construct FlexProofs, a VC scheme that can generate all proofs for a vector of size $N$ in optimal time ${\cal O}(N)$ (incurred by ${\cal O}(N)$ field operations and ${\cal O}(N/b + \sqrt{N}\log N)$ cryptographic operations) and is directly compatible with zkSNARKs, where batch size $b$ ranges from $1$ to $\sqrt{N}$. 
Finally, FlexProofs, combined with suitable zkSNARKs, enables applications such as VSS and VRA.

\section*{Acknowledgments}
The authors would like to thank the anonymous shepherd and reviewers for their insightful comments. The corresponding author Liang Feng Zhang's research is partially supported by the National Natural Science Foundation of China (Grant No. 62372299).
%
% ---- Bibliography ----
%
% BibTeX users should specify bibliography style 'splncs04'.
% References will then be sorted and formatted in the correct style.
%
\bibliographystyle{splncs04}
\bibliography{references}

\appendix
\section{Definitions}\label{app:def}
\begin{definition}
[\bf Zero-Knowledge of Polynomial Commitment]
A polynomial commitment (PC) scheme is zero-knowledge if for any $\lambda$, $n$, $d$, adversary $\mathcal{A}$, and simulator $\mathcal{S}$, we have:
$$
\Pr\left(\text{Real}_{\mathcal{A}, f}(1^\lambda) = 1\right) \approx \Pr\left(\text{Ideal}_{\mathcal{A}, \mathcal{S}}(1^\lambda) = 1\right).
$$
\fbox{
\begin{minipage}[t]{0.48\textwidth}
	\em
	\noindent\textbf{Real$_{\mathcal{A}, f}(1^\lambda)$}:
	\begin{enumerate}
		\item ${\sf pp}\leftarrow {\sf PC.Setup}(1^\lambda, 1^n, 1^d)$
		\item $C_f \leftarrow {\sf PC.Commit}(f, r_f)$
		\item $m \leftarrow \mathcal{A}(1^\lambda, {\sf pp}, C_f)$
		\item For each step $j \in \{2, 3, \ldots, m\}$:
		\begin{enumerate}
			\item $\mathbf{x}_j \leftarrow \mathcal{A}(1^\lambda, C_f, y_1, \ldots, y_{j-1},\\ \pi_1, \ldots, \pi_{j-1}, {\sf pp})$
			\item $y_j, \pi_j \leftarrow {\sf PC.Eval}(f, x_j)$
		\end{enumerate}
		\item Output $b \leftarrow \mathcal{A}(1^\lambda, C_f, y_1, \ldots, \\ y_m, \pi_1, \ldots, \pi_m, {\sf pp})$
	\end{enumerate}
\end{minipage}
\hfill
\begin{minipage}[t]{0.48\textwidth}
	\noindent\textbf{Ideal$_{\mathcal{A}, \mathcal{S}}(1^\lambda)$}:
	\begin{enumerate}
		\item $(C_f, {\sf pp}, t) \leftarrow \mathcal{S}(1^\lambda, 1^n)$
		\item $m \leftarrow \mathcal{A}(1^\lambda, {\sf pp}, C_f)$
		\item For each step $j \in \{2, 3, \ldots, m\}$:
		\begin{enumerate}
			\item ${\bf x}_j \leftarrow \mathcal{A}(1^\lambda, C_f, y_1, \ldots, y_{j-1}, \\ \pi_1, \ldots, \pi_{j-1}, {\sf pp})$
			\item $y_j, \pi_j \leftarrow \mathcal{S}({\sf pp}, f, {\bf x}_j)$
		\end{enumerate}
		\item Output $b \leftarrow \mathcal{A}(1^\lambda, C_f, y_1, \ldots, \\ y_m, \pi_1, \ldots, \pi_m, {\sf pp})$
	\end{enumerate}
\end{minipage}}
\end{definition}

\begin{definition}
	[Position Hiding of Vector Commitment]
	A vector commitment (VC) scheme is position hiding if for any PPT adversary \(\mathcal{A}\) and simulator \(\mathcal{S}\), the following holds:
	\[
	\Pr\left(\text{Real}_{\mathcal{A}, {\bf m}}(1^\lambda) = 1\right) \approx \Pr\left(\text{Ideal}_{\mathcal{A}, \mathcal{S}}(1^\lambda) = 1\right).
	\]
	\fbox{
		\begin{minipage}[t]{0.48\textwidth}
			\em
			\noindent\textbf{ Real$_{\mathcal{A}, {\bf m}}(1^\lambda)$}:
			\begin{enumerate}
				\item ${\sf pp} \leftarrow {\sf VC.Setup}(1^\lambda, 1^N)$
				\item $C \leftarrow {\sf VC.Commit}(\mathbf{m})$
				\item $i_1, \ldots, i_n, m_{i_1}, \ldots, m_{i_n} \leftarrow\\ \mathcal{A}(1^\lambda, {\sf pp},  C)$ (for $n < N$)
				\item $\{\pi_i\}_{i=1}^{N} \leftarrow {\sf VC.OpenAll}(\mathbf{m})$.\\ Send $ \{\pi_{i_1}, \ldots, \pi_{i_n}\}$ to $\mathcal{A}$
				\item Output $b \leftarrow \mathcal{A}(1^\lambda, C, m_{i_1}, \\ \ldots,  m_{i_n}, \pi_{i_1}, \ldots, \pi_{i_n}, {\sf pp})$
			\end{enumerate}
		\end{minipage}
		\hfill
		\begin{minipage}[t]{0.48\textwidth}
			\noindent\textbf{Ideal$_{\mathcal{A}, \mathcal{S}}(1^\lambda)$}:
			\begin{enumerate}
				\item $(C, {\sf pp}, \text{trap}) \leftarrow \mathcal{S}(1^\lambda, N)$
				\item $i_1, \ldots, i_n, m_{i_1}, \ldots, m_{i_n} \leftarrow \\ \mathcal{A}(1^\lambda, {\sf pp}, C)$ (for $n < N$)
				\item $\{\pi_{i_1}, \ldots, \pi_{i_n}\} \leftarrow  \mathcal{S}(i_1, \ldots, i_n, m_{i_1}, \\ \ldots, m_{i_n}, \text{trap}, {\sf pp})$
				\item Output $b \leftarrow \mathcal{A}(1^\lambda, C, m_{i_1}, \ldots, \\ m_{i_n}, \pi_{i_1}, \ldots, \pi_{i_n}, {\sf pp})$
			\end{enumerate}
	\end{minipage}}
\end{definition}

\section{Our FC Scheme}\label{app:B}
\begin{assumption}
	[$q$-Strong Diffie-Hellman assumption ($q$-SDH)\cite{BB08}]\label{asp:qSDH} 
	Given a security parameter $\lambda$, algorithm ${\sf BG}$ outputs ${\bf bg}=$ 
	$(p, \mathbb{G}_1, \mathbb{G}_2, \mathbb{G}_{\rm T}, e, g_1, g_2)$.
	The $q$-SDH assumption holds relative to ${\sf BG}$ if for any efficient algorithm $\mathcal{A}$,
		$$
		\Pr
		\begin{bmatrix}
		{\bf bg} \leftarrow {\sf BG}(1^\lambda),
		\gamma \overset{\$}{\leftarrow} \mathbb{F}_p,
		{\sf pp}=({\bf bg}, g_1^\gamma, \ldots, g_1^{\gamma^q}, g_2^\gamma):\\
		(a, g_1^{{1}/({\gamma+a})})\leftarrow \mathcal{A}(1^\lambda, {\sf pp})
		\end{bmatrix}
		\leq {\sf negl}(\lambda).
		$$
\end{assumption}

\begin{assumption}
	[$q$-Auxiliary Structured Double Pairing assumption 
	]{\bf ($q$-ASDBP)}\label{asp:qASDBP}
	Given a security parameter $\lambda$, ${\sf BG}$ output ${\bf bg}=(p, \mathbb{G}_1, \mathbb{G}_2, \mathbb{G}_{\rm T}, e, g_1, g_2)$.
	The $q$-ASDBP assumption holds relative to ${\sf BG}$ if for any efficient algorithm $\mathcal{A}$,
	$$
	\Pr
	\begin{bmatrix}
	{\bf bg} \leftarrow {\sf BG}(1^\lambda),~
	\beta \overset{\$}{\leftarrow} \mathbb{F}_p,
	(A_0, \ldots, A_{q-1}) \leftarrow  {\mathcal{A}}({\bf bg}, g_1^\beta, \{g_2^{\beta^{2i}}\}_{i=1}^{q-1}):\\
	((A_0, \ldots, A_{q-1}) \neq {\bf 1}_{\mathbb{G}_1}) \wedge ({1}_{\mathbb{G}_{\rm T}}=\prod\nolimits_{i=0}^{q-1}e(A_i, g_2^{\beta^{2i}}))\\
	\end{bmatrix}
	\leq {\sf negl}(\lambda).
	$$
\end{assumption}

\begin{lemma}\label{lem:equalityB}
	Let ${\bf A}\in \mathbb{G}_1^n$. Let ${\bf b}^{(i)}\in \mathbb{F}_p^n$ and $y_i\in \mathbb{G}_1$ for $i\in [0, t)$.
	Assume each $r_i(i\in [0, t))$ is chosen uniformly at random from $\mathbb{F}_p$. Then with probability at least $1-1/p$, all Eq. \eqref{eq:distinctB} are satisfied iff Eq. \eqref{eq:batchB} is satisfied.
\end{lemma}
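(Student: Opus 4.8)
The plan is to reduce the group-theoretic statement to a single linear equation over $\mathbb{F}_p$ and then apply a Schwartz--Zippel style count. The crucial observation is that multi-exponentiation is linear in its second argument: for any ${\bf A}\in\mathbb{G}_1^n$ and any scalars $r_i$,
\begin{equation*}
\Big\langle {\bf A}, \sum_{i\in[0,t)} r_i{\bf b}^{(i)}\Big\rangle
=\prod_{i\in[0,t)}\langle {\bf A}, {\bf b}^{(i)}\rangle^{r_i},
\end{equation*}
which follows by expanding $\langle{\bf A},{\bf b}\rangle=\prod_{j}{\bf A}[j]^{{\bf b}[j]}$, substituting the $j$-th coordinate $\sum_i r_i{\bf b}^{(i)}[j]$, and regrouping the product using commutativity of $\mathbb{G}_1$. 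I would establish this identity first, since it drives both directions of the equivalence.

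The forward direction is then immediate and holds with probability $1$: if $\langle{\bf A},{\bf b}^{(i)}\rangle=y_i$ for every $i$, then substituting into the identity above gives exactly Eq.~\eqref{eq:batchB}. Hence all of the probabilistic loss lives in the reverse direction, and it suffices to bound the probability that \eqref{eq:batchB} holds while some equation in \eqref{eq:distinctB} fails.

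For the reverse direction I would set $z_i=\langle{\bf A},{\bf b}^{(i)}\rangle\cdot y_i^{-1}\in\mathbb{G}_1$, so that the $i$-th equation of \eqref{eq:distinctB} holds iff $z_i=1_{\mathbb{G}_1}$. Using the identity, Eq.~\eqref{eq:batchB} rewrites as $\prod_{i\in[0,t)}z_i^{r_i}=1_{\mathbb{G}_1}$. Since $\mathbb{G}_1=\langle g_1\rangle$ has prime order $p$, each $z_i=g_1^{c_i}$ for a unique $c_i\in\mathbb{F}_p$, with $z_i=1_{\mathbb{G}_1}$ iff $c_i=0$, so the condition collapses to the single linear equation $\sum_{i\in[0,t)} c_i r_i=0$ over $\mathbb{F}_p$. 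The $c_i$ are fixed by ${\bf A}$, $\{{\bf b}^{(i)}\}$ and $\{y_i\}$ before the $r_i$ are drawn. If some $c_j\neq 0$ (i.e.\ some individual equation fails), I would condition on all $r_i$ with $i\neq j$: then $\sum_i c_i r_i$ is an affine, hence bijective, function of $r_j$ over $\mathbb{F}_p$, so it vanishes for exactly one value of $r_j$, yielding probability exactly $1/p$. Thus the bad event---\eqref{eq:batchB} holds yet \eqref{eq:distinctB} fails---has probability at most $1/p$, and the two conditions are equivalent with probability at least $1-1/p$.

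I do not anticipate a genuine obstacle; the argument is routine once the linearity identity is in place. The only points requiring care are (i) recording that the $c_i$ are determined independently of the random $r_i$, so the final step is a clean linear count rather than a statement about correlated randomness, and (ii) keeping the logical structure of the ``equivalent with probability $\ge 1-1/p$'' claim straight---the forward implication is deterministic, so the $1/p$ slack bounds only the reverse implication, and no union bound over the $t$ equations is needed because a single nonzero $c_j$ already forces the bad-event probability down to $1/p$.
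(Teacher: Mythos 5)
Your proof is correct and follows essentially the same route as the paper's: both reduce the equivalence to the deterministic forward direction plus a bound on the bad event, take discrete logarithms of the ratios $\langle{\bf A},{\bf b}^{(i)}\rangle/y_i$ (your $c_i$ are the paper's $a_i$ up to sign), and bound the probability that the resulting linear form $\sum_i c_i r_i$ vanishes by $1/p$. The only cosmetic difference is that you prove the degree-one Schwartz--Zippel count by hand via conditioning on $\{r_i\}_{i\neq j}$, where the paper simply cites the Schwartz--Zippel lemma.
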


\begin{proof}
	Clearly, if Eq. \eqref{eq:distinctB} holds, then Eq. \eqref{eq:batchB} holds.
	For the other direction, assume the correct value of ${\bf A}^{{\bf b}^{(i)}}$ is $\hat{y}_i$ for $i\in [0, t)$, then there exists at least one $j$ such that $y_j\neq \hat{y}_j$ and $\prod_{i\in [0, t)} y_i^{r_i}=\prod_{i\in [0, t)} \hat{y}_i^{r_i}$. 	
	Define the  $a_i=\log_{g_1}y_i/\hat{y}_i$, then there will be at least one $j$ for which $a_j\neq 0$ and
	$
	\sum_{i\in [0, t)}a_i{r_i}=0 \mod p.
	$
	By the Schwartz-Zippel lemma \cite{S80,Z79}, this occurs with probability $\leq 1/p$.
	Therefore, if Eq. \eqref{eq:distinctB} is not fully satisfied, Eq. \eqref{eq:batchB} holds with probability at most $1/p$.\qed
\end{proof}

\vspace{2mm}
\noindent
{\bf Non-interactive argument of knowledge in the ROM.}
B{\"u}nz et al. \cite{BMM+21} define a non-interactive argument of knowledge in the random oracle model (ROM). 
In ROM, a hash function is replaced
by a random function which is sampled from the space of all random functions $\rho \leftarrow \mathcal{U}(1^\lambda)$. 
The non-interactive argument is an argument system where the prover sends a single message $\pi$, and the verifier using the proof accepts or rejects. Both the prover and verifier have access to a random oracle $\rho$. An argument of knowledge in the ROM has the property that for each convincing prover there exists an extractor which can rewind the prover and reinitialize the random oracle with new randomness.
\begin{definition}\label{df:non-interactive}
	{\bf (Non-interactive argument of knowledge in the ROM, from \cite{BMM+21}).}
	A non-interactive argument is an argument of knowledge for a relation $\mathcal{R}$ with knowledge error $\kappa(\lambda)$ if for every adversary $\widetilde{\mathcal{P}}$ there exists an extractor $\mathcal{E}$ such that
	$$
	\Pr
	\begin{bmatrix}
	\rho \leftarrow \mathcal{U}(1^\lambda),~
	{\sf crs} \leftarrow {\sf Setup}(1^\lambda),
	(\mathbbm{x}, \pi) \leftarrow  \widetilde{\mathcal{P}}^\rho({\sf crs}),\\
	\mathbbm{w} \leftarrow  {\mathcal{E}}^{\widetilde{\mathcal{P}}, \rho}({\sf crs}, \mathbbm{x}, \pi):
	(\mathcal{V}^\rho({\sf crs}, \mathbbm{x}, \pi)=1) \wedge ((\mathbbm{x}, \mathbbm{w})\notin {\cal R})
	\end{bmatrix}
	\leq \kappa(\lambda).
	$$
	The $\mathcal{E}$ can rewind the prover and reinitialize (but not program) the random oracle.
\end{definition}

\subsection{Function Binding}\label{sec:FCSnd}
Before proving the function binding of our FC scheme, we first show that our FC scheme is a non-interactive argument of knowledge in the ROM for the relation 
$\mathcal{R}_B=
\{({C}\in \mathbb{G}_{\rm T}, \{{\bf b}^{(i)}\}_{i\in [0, t)}\subset \mathbb{F}_p^n, \{y_i\}_{i\in [0, t)}\subset \mathbb{G}_1, {\bf v}\in \mathbb{G}_2^n; {\bf A}\in \mathbb{G}_1^n):({C}={\bf A}*{\bf v})\wedge (\forall i\in [0, t), y_i=\langle {\bf A}, {\bf b}^{(i)}\rangle)\}.
$
That is, we show that for any PPT adversary $\mathcal{A}_B$, we can construct a PPT extractor $\mathcal{E}_B$ that extracts the vector ${\bf A}$ such that
$$
\Pr
\begin{bmatrix}
\rho \leftarrow \mathcal{U}(1^\lambda),~
{\sf pp}(={\bf v}) \leftarrow {\sf FC.Setup}(1^\lambda, 1^n),\\
({C}, \{{\bf b}^{(i)}\}_{i\in [0, t)}, \{y_i\}_{i\in [0, t)}, \pi_y) \leftarrow  {\mathcal{A}}_B^\rho({\sf pp}),\\
{\bf A} \leftarrow  {\mathcal{E}}_B^{{\mathcal{A}_B}, \rho}({\sf pp}, {C}, \{{\bf b}^{(i)}\}_{i\in [0, t)}, \{y_i\}_{i\in [0, t)}, \pi_y):\\
({\sf FC.BVerify}^\rho({C}, \{{\bf b}^{(i)}\}_{i\in [0, t)}, \{y_i\}_{i\in [0, t)}, \pi_y)=1) \wedge \\
(({C}, \{{\bf b}^{(i)}\}_{i\in [0, t)}, \{y_i\}_{i\in [0, t)}, {\bf v}; {\bf A})\notin {\cal R}_B)
\end{bmatrix}
\leq {\sf negl}(\lambda).
$$

To prove this, we first prove the knowledge soundness of the $\sf FC.BOpen$ and $\sf FC.BVerify$ algorithms in the single-instance case (specifically, $t=1$) and without the random value $r_0$.

\vspace{2mm}
\noindent
{\bf Knowledge soundness in the single-instance case and without the random value.}
For simplicity, we denote the $\sf FC.BOpen$ and $\sf FC.BVerify$ algorithms in the single-instance case (specifically, $t=1$) and without the random value $r_0$ as $\sf FC.Open$ and $\sf FC.Verify$. 
Note that in $\sf FC.Open$ and $\sf FC.Verify$, ${\bf b}={\bf b}^{(0)}$ and $y=y_0$.
We will show that the scheme consisting of algorithms $\sf FC.Setup$, $\sf FC.Commit$, $\sf FC.Open$ and $\sf FC.Verify$ is a non-interactive argument of knowledge in the ROM for the relation $\mathcal{R}_S=\{({C}\in \mathbb{G}_{\rm T}, {\bf b}\in \mathbb{F}_p^n, y\in \mathbb{G}_1, {\bf v}\in \mathbb{G}_2^n; {\bf A}\in \mathbb{G}_1^n):{C}={\bf A}*{\bf v}\wedge y=\langle {\bf A}, {\bf b}\rangle\}.$
That is, we show that for any PPT adversary $\mathcal{A}_S$, we can construct a PPT extractor $\mathcal{E}_S$ that extracts the vector ${\bf A}$ such that
$$
\Pr
\begin{bmatrix}
\rho \leftarrow \mathcal{U}(1^\lambda),~
{\sf pp}(={\bf v}) \leftarrow {\sf FC.Setup}(1^\lambda, 1^n),\\
({C}, {\bf b}, y, \pi_y) \leftarrow  {\mathcal{A}}_S^\rho({\sf pp}),
~{\bf A} \leftarrow  {\mathcal{E}}_S^{{\mathcal{A}_S}, \rho}({\sf pp}, {C}, {\bf b}, y, \pi_y):\\
({\sf FC.Verify}^\rho({C}, {\bf b}, y, \pi_y)=1) \wedge 
(({C}, {\bf b}, y, {\bf v}; {\bf A})\notin {\cal R}_S)
\end{bmatrix}
\leq {\sf negl}(\lambda).
$$
The existence of $\mathcal{E}_S$ is guaranteed
by \cite{BMM+21}.
\begin{comment}
Since ${\sf FC.Verify}$ is correctly executed and outputs 1, we have 
${\bf C}_{j+1} = {\bf L}_{j+1}^{x_{j+1}}\circ {\bf C}_{j} \circ ({\bf R}_{j+1})^{1/x_{j+1}}$
for $j\in [0, l)$
and
${\bf C}_{\ell}=({\bf A}_{\ell}*{\bf v}_{\ell}, \langle {\bf A}_{\ell}, {{\bf b}_{\ell}}\rangle)$.
By Lemma \ref{le:statement}, the extractor $\mathcal{E}_S$ can extract the witness of ${\bf C}_{\ell-1},\ldots, {\bf C}_{0}$ in sequence with overwhelming probability.
Therefore, the extractor $\mathcal{E}_S$ extracts the witness ${\bf A}$ for ${\bf C}_0$ with overwhelming probability. 
\end{comment}

\vspace{2mm}
\noindent
{\bf Constructing ${\cal E}_B$.}
${\cal E}_B$ builds a adversary ${\cal A}_S$ against the extractability game of algorithms $\sf FC.Verify$. ${\cal A}_S$ is given $({\sf pp}, {C}, \{{\bf b}^{(i)}\}_{i\in [0, t)}, \{y_i\}_{i\in [0, t)}, \pi_y)$, computes ${\bf b}=\sum_{i\in [0, t)}r_i{\bf b}^{(i)}$ and $y=\prod_{i\in [0, t)} y_i^{r_i}$, and output $({C}, {\bf b}, y, \pi_y)$. Then ${\cal E}_B$ invokes ${\cal E}_S$ which outputs the vector ${\bf A}$ such that ${C}={\bf A}*{\bf v}$ and $y=\langle {\bf A}, {\bf b}\rangle$.
The equation $y=\langle {\bf A}, {\bf b}\rangle$ means that $\prod_{i\in [0, t)} y_i^{r_i}=\langle {\bf A}, \sum_{i\in [0, t)}r_i{\bf b}^{(i)}\rangle$. 
According to Lemma \ref{lem:equalityB}, with overwhelming probability, all $\{\langle {\bf A}, {\bf b}^{(i)}\rangle=y_i\}_{i\in [0, t)}$ are satisfied. 
Therefore, ${\cal E}_B$ can output vector ${\bf A}$ such that $({C}, \{{\bf b}^{(i)}\}_{i\in [0, t)}, \{y_i\}_{i\in [0, t)}, {\bf v}; {\bf A})\in {\cal R}_B$ with overwhelming probability.
The probability ${\cal E}_B$ fails is only negligible to the security parameter.

\vspace{2mm}
\noindent
{\bf Proving Function Binding.}
From the above, we prove the function binding of our FC scheme as follows. Assuming that an adversary $\cal A$ outputs $C$, $\{{\bf b}^{(i)}, y_{i}, \hat{y}_{i}\}_{i\in [0, t)}, \pi_{y}, \pi_{\hat{y}}$ such that (1) 
${\sf FC.BVerify}(C, \{{\bf b}^{(i)}\}_{i\in [0, t)}, \{{y}_{i}\}_{i\in [0, t)}, \pi_{{y}})=1$ and (2)
${\sf FC.BVerify}(C, \{{\bf b}^{(i)}\}_{i\in [0, t)}, \{\hat{y}_{i}\}_{i\in [0, t)}, \pi_{\hat{y}})=1$ with non-negligible probability.
By the knowledge soundness of our
FC scheme, equation (1) means that we can extract a vector ${\bf A}$ such that $C={\bf A}*{\bf v}$ and $\{y_i=\langle {\bf A}, {\bf b}^{(i)}\rangle)\}_{i\in [0, t)}$, equation (2) means that we can extract a vector ${\bf A}^\prime$ such that $C={\bf A}^\prime*{\bf v}$ and $\{\hat{y}_i=\langle {\bf A}^\prime, {\bf b}^{(i)}\rangle)\}_{i\in [0, t)}$.
Since there exists $j\in [0, t)$ such that $y_j\neq \hat{y}_j$, we have ${\bf A}\neq{\bf A}^\prime$.
Since ${\bf A}\neq {\bf A}'$ and ${\bf A}*{\bf v}={\bf A}'*{\bf v}$, we construct another adversary $\mathcal{B}$ which outputs $({\bf A}[i]/{\bf A}^\prime[i])_{i\in [0, n)}$ which satisfy
$
({\bf A}[0]/{\bf A}^\prime[0], \ldots, {{\bf A}[n-1]}/{{\bf A}^\prime[n-1]}) \neq {\bf 1}_{\mathbb{G}_1} \wedge {1}_{\mathbb{G}_{\rm T}}=\prod_{i=0}^{n-1}e({\bf A}[i]/{\bf A}^\prime[i], h^{\beta^{2i}})
$ with non-negligible probability, breaking $n$-ASDBP assumption.

\section{Correctness and Security Analysis of FlexProofs}\label{app:vc}
\begin{theorem}
	FlexProofs satisfies the correctness property (Definition  \ref{df:VCCor}).
\end{theorem}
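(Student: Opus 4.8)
The plan is to show that for every verifier ${\cal V}_i$—equivalently, for the opening proof $\pi_i$ of every index $i\in[0,N)$ produced by ${\sf VC.OpenAll}$—both checks performed by ${\sf VC.Verify}$, namely the Step~1 ${\sf FC.BVerify}$ call and the Step~2 ${\sf PC.Verify}$ call, return $1$ whenever ${\sf VC.Setup}$, ${\sf VC.Commit}$ and ${\sf VC.OpenAll}$ are executed honestly. Since the challenges $\{r_j\}_{j\in[0,\mu)}$ are fixed deterministically (via the Fiat--Shamir/Merkle-tree derivation in the non-interactive version) and every identity invoked below holds unconditionally, the acceptance probability is exactly $1$, as required by Definition~\ref{df:VCCor}. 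Throughout I write $i=j\mu+a$ with $j=\lfloor i/\mu\rfloor$ and $a=i\bmod\mu$, so that ${\bf m}[i]={\bf m}_j[a]$.

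For Step~1, recall that the batch proof $\pi_{{\bf C}_k}$ was produced by ${\sf FC.BOpen}(C,{\bf C},\{{\bf u}_a\}_{a\in S_k},\{C_a\}_{a\in S_k})$ for $k=\lfloor i/(\mu b)\rfloor$ and $S_k=[kb,(k+1)b)$. The key observation is that $\langle {\bf C},{\bf u}_a\rangle={\bf C}[a]=C_a$ for every $a$, since ${\bf u}_a$ is the $a$-th unit vector. Hence the values $\{C_a\}_{a\in S_k}$ opened by the prover are exactly the multi-exponentiations $\{\langle{\bf C},{\bf u}_a\rangle\}_{a\in S_k}$, and the already-established correctness of the FC scheme guarantees ${\sf FC.BVerify}(C,\{{\bf u}_a\}_{a\in S_k},\{C_a\}_{a\in S_k},\pi_{{\bf C}_k})=1$.

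The substance of the argument is Step~2, which I would handle by an induction up the folding tree. For a node $w$ covering the leaf set $L_w\subseteq[0,\mu)$, I claim the invariant that the prover holds $g_w=\sum_{j\in L_w}r_jf_j$, that the commitment is $D_w=\prod_{j\in L_w}C_j^{r_j}$, and that every claimed value satisfies $y_{w,a}=g_w({\sf Bin}(a))$. For the base case (a leaf $j$), the verifier ${\cal V}_{j\mu+a}$ computes $y_{j,a}=r_j m_{j\mu+a}$; using the multilinear-extension identity $f_j({\sf Bin}(a))={\bf m}_j[a]=m_{j\mu+a}$ from Eq.~\eqref{eq:f_j} gives $y_{j,a}=r_j f_j({\sf Bin}(a))=g_j({\sf Bin}(a))$, while $D_j=C_j^{r_j}$ and $g_j=r_jf_j$ hold by Eq.~\eqref{eq:ProverRand}. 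The inductive step is immediate from the {\bf Sub-protocol}: $g_w=g_{w_l}+g_{w_r}$, $y_{w,a}=y_{w_l,a}+y_{w_r,a}$ and $D_w=D_{w_l}D_{w_r}$ combine the two children's invariants additively (respectively multiplicatively). At the root $L_w=[0,\mu)$, so $g^*=\sum_{j\in[0,\mu)}r_jf_j$, $D^*=\prod_{j\in[0,\mu)}C_j^{r_j}$ and $y_a^*=g^*({\sf Bin}(a))$ for all $a$.

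It then remains to connect $D^*$ to $g^*$ through the PC scheme and invoke completeness. Using the homomorphism of the PST commitment together with $C_j={\sf PC.Commit}(f_j)$, I obtain $D^*=\prod_{j}{\sf PC.Commit}(f_j)^{r_j}={\sf PC.Commit}\!\big(\sum_j r_j f_j\big)={\sf PC.Commit}(g^*)$, so $D^*$ is a genuine commitment to $g^*$. The prover runs ${\sf PC.HyperEval}$ on $g^*$, producing for each $a$ the pair $(g^*({\sf Bin}(a)),\pi_a^*)=(y_a^*,\pi_a^*)$, where the equality of evaluations is precisely the root invariant. Completeness of the PC scheme (Definition~\ref{df:PCCom}) then yields ${\sf PC.Verify}(D^*,{\sf Bin}(a),y_a^*,\pi_a^*)=1$, completing Step~2. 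The main obstacle is the bookkeeping of the tree induction—in particular, keeping the index correspondence $i=j\mu+a$ consistent and verifying that the homomorphic commitment composes correctly through every internal node so that the verifier-computed $y_a^*$ and $D^*$ coincide exactly with the quantities ${\sf PC.HyperEval}$ certifies; the cryptographic content is otherwise fully supplied by FC correctness, the PC homomorphism, and PC completeness.
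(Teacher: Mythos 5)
Your proposal is correct and follows essentially the same route as the paper's own proof: Step~1 is discharged by the correctness of the FC scheme applied to the unit-vector openings $\langle{\bf C},{\bf u}_a\rangle=C_a$, and Step~2 by the folding identities $g^*=\sum_j r_j f_j$, $D^*=\prod_j C_j^{r_j}$ together with completeness of the PC scheme, concluding with the Fiat--Shamir/ROM argument for the non-interactive version. Your explicit tree induction and the homomorphism computation $D^*=\prod_j {\sf PC.Commit}(f_j)^{r_j}={\sf PC.Commit}(g^*)$ merely spell out what the paper asserts directly as ``by the construction of the folding scheme,'' so the two proofs coincide in substance.
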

\begin{proof}
	We first prove that the interactive version of FlexProofs satisfies the correctness property.
	Specifically, we show that for any security parameter $\lambda$, any integer $N>0$, any vector ${\bf m}\in \mathbb{F}_p^N$, and any index $i\in [0, N)$, it holds that:
	$$
	\Pr
	\begin{bmatrix}
	{\sf pp} \leftarrow {\sf VC.Setup}(1^\lambda, 1^{N}),
	(C, {\sf aux})\leftarrow {\sf VC.Commit}({\bf m}),\\
	\langle {\cal P}_{\it OpenAll}({\sf pp}, {\sf aux}, i, {\bf m}), ~{\cal V}_i({\sf pp}, C, i, {\bf m}[i])\rangle=1
	\end{bmatrix}=1.
	$$
	By Eq.~\eqref{eq:genCi} and \eqref{eq:genC}, $C$ is a commitment to a vector ${\bf C}(=(C_0, \ldots, C_{\mu-1}))$ under the FC scheme. 
	Let $k = \lfloor i / (\mu b) \rfloor$ and $K=[kb, (k+1)b)$.
	According to Eq.~\eqref{eq:genFCBatchProof}, the proof $\pi_{{\bf C}_k}$ is a batch proof for $\{C_a=\langle {\bf C}, {\bf u}_a\rangle\}_{a\in K}$ (i.e., $C_a$ is the $a$-th element of ${\bf C}$). 
	By Definition \ref{df:FCCor}, the correctness property of the FC scheme ensures that ${\sf FC.BVerify}({C}, \{{\bf u}_a\}_{a\in K}, \{C_a\}_{a\in K}, \pi_{{\bf C}_{k}})=1.$
	
	As per Eq. \eqref{eq:ProverRand}, the prover ${\cal P}_{OpenAll}$ random its values.
	For convenience, let $[i]_\mu=i \bmod \mu$.
	As per Eq. \eqref{eq:VerifierRand}, the verifier ${\cal V}_i$ computes $D_{\lfloor i/\mu \rfloor}$ and ${y}_{\lfloor i/\mu \rfloor, [i]_\mu}$. 
	As described in Step 2.2, the verifier also receives 
	$\{D_w, y_{w, [i]_\mu}\}_{w \in {\sf sib}(\lfloor i/\mu \rfloor)}$, where ${\sf sib}(\lfloor i/\mu \rfloor)$ is the set of sibling nodes along the path from the root to the $\lfloor i/\mu \rfloor$-th leaf in the folding tree. Then, the verifier computes:
	$$
	D^* = \sum_{w \in {\sf sib}(\lfloor i/\mu \rfloor)} D_w + D_{\lfloor i/\mu \rfloor}, \quad
	y^*_{[i]_\mu} = \sum_{w \in {\sf sib}(\lfloor i/\mu \rfloor)} y_{w, [i]_\mu} + {y}_{\lfloor i/\mu \rfloor, [i]_\mu}.
	$$	
	By the construction of the folding scheme, we have:
	$D^* = \prod_{j=0}^{\mu-1} C_j^{r_j}$ and $g^* = \sum_{j=0}^{\mu-1} r_j f_j,
	$
	so that $D^*$ is a commitment to the folded polynomial $g^*$, and $y^*_{[i]_\mu} = g^*({\sf Bin}([i]_\mu))$.
	As per Step 2.3, the proof $\pi^*_{[i]_\mu}$ proves $y^*_{[i]_\mu} = g^*({\sf Bin}([i]_\mu))$.
	Then, by Definition \ref{df:PCCom}, the correctness property of the underlying PC scheme implies that ${\sf PC.Verify}(D^*, {\sf Bin}({[i]_\mu}), y_{[i]_\mu}^*, \pi^*_{[i]_\mu})=1.$
	
	Thus, the verifier accepts the proof with probability 1, confirming the correctness of the interactive protocol.
	And under the random oracle model, the correctness of the non-interactive protocol resulting from the Fiat-Shamir transformation \cite{ZXH+22} follows directly from the established correctness of the underlying interactive scheme.
\end{proof}
\begin{theorem}
	FlexProofs satisfies the position binding property (Definition \ref{df:VCSd}) in the ROM.
\end{theorem}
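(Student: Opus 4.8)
The plan is to prove position binding by a two-stage reduction that mirrors the two-layer (FC-over-PC) structure of FlexProofs. First I would establish that the non-interactive {\em OpenAll} protocol is an argument of knowledge in the ROM (Definition \ref{df:non-interactive}): from any PPT adversary producing an accepting proof for verifier ${\cal V}_i$, one can extract a polynomial $f_{\lfloor i/\mu\rfloor}$ together with its evaluation $f_{\lfloor i/\mu\rfloor}({\sf Bin}(i\bmod\mu))=m_i$, consistent with the committed $C$. This extractor ${\cal E}_{OpenAll}$ is assembled from three ingredients: the FC knowledge extractor ${\cal E}_B$ established in the function-binding argument (Section \ref{sec:FCSnd}), which recovers ${\bf C}=(C_0,\ldots,C_{\mu-1})$ with $C={\bf C}*{\bf v}$ and $C_a=\langle{\bf C},{\bf u}_a\rangle$; the PC knowledge-soundness extractor (Definition \ref{df:PCKS}); and a folding-scheme extractor for the linear combination of multilinear polynomials, whose existence follows from the soundness of polynomial folding \cite{GWC19,BDF+21}.

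To prove position binding itself, suppose a PPT adversary outputs two accepting proofs $\pi_i,\pi_i'$ for the same index $i$ and values $m_i\neq m_i'$. Each proof exposes, through its FC layer, a claimed subvector commitment; call them $C_{\lfloor i/\mu\rfloor}$ and $C_{\lfloor i/\mu\rfloor}'$, and set $k=\lfloor i/(\mu b)\rfloor$ and $S_k=[kb,(k+1)b)$, so that $\lfloor i/\mu\rfloor\in S_k$. I would split into two cases. In the first case $C_{\lfloor i/\mu\rfloor}\neq C_{\lfloor i/\mu\rfloor}'$: both ${\sf FC.BVerify}$ checks accept against the same top-level commitment $C$ with the same family of unit vectors $\{{\bf u}_a\}_{a\in S_k}$ but open position $\lfloor i/\mu\rfloor$ to different values, which is exactly a break of FC function binding (Definition \ref{df:FCSd}), established in Section \ref{sec:FCSnd} under the $n$-ASDBP and $q$-SDH assumptions (Assumptions \ref{asp:qASDBP} and \ref{asp:qSDH}); hence this case occurs only with negligible probability.

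In the second case $C_{\lfloor i/\mu\rfloor}=C_{\lfloor i/\mu\rfloor}'$ the subvector commitment agrees, so the discrepancy must surface in the PC layer. Applying ${\cal E}_{OpenAll}$ to each proof extracts a polynomial from the fixed commitment $C_{\lfloor i/\mu\rfloor}$; since the PC commitment binds the committed polynomial uniquely, both extractions return the {\em same} $f_{\lfloor i/\mu\rfloor}$, yet the two proofs assert $f_{\lfloor i/\mu\rfloor}({\sf Bin}(i\bmod\mu))=m_i$ and simultaneously $=m_i'$. As the evaluation of a fixed polynomial at a fixed point is single-valued, this forces $m_i=m_i'$, contradicting $m_i\neq m_i'$ except with negligible probability. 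Combining the two cases bounds the position-binding advantage by a negligible function.

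The main obstacle is the PC-layer extraction in the second case, because {\em OpenAll} only ever proves evaluations of the single folded polynomial $g^*=\sum_{j}r_jf_j$ rather than of the individual $f_j$. To recover each $f_j$ (and thus $f_{\lfloor i/\mu\rfloor}$) from its commitment, I would run the extractor on $\mu$ parallel instances with fresh challenge vectors $(r_0,\ldots,r_{\mu-1})$ and invert the resulting $\mu\times\mu$ matrix of challenges, which is invertible with overwhelming probability \cite{FG15}. Because the protocol is non-interactive via the Fiat--Shamir transform of \cite{ZXH+22}, this requires rewinding the adversary and reinitializing the random oracle in the sense of Definition \ref{df:non-interactive}, and one must argue the challenges behave uniformly so that Lemma \ref{lem:equalityB} and the folding soundness apply. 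I expect this rewinding-plus-inversion step to be the technically delicate part; the FC and PC binding reductions themselves are routine given the already-established building blocks.
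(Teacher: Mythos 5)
Your overall strategy is sound, and your case split (on whether the sub-commitment $C_{\lfloor i/\mu\rfloor}$ agrees across the two accepting proofs) coincides with the paper's; in particular, your Case 1 is the same reduction to the function binding of the FC scheme (Definition \ref{df:FCSd}) that the paper uses. Where you genuinely diverge is in Case 2 and in the treatment of non-interactivity. The paper never extracts the individual polynomials $f_j$: given two accepting transcripts with $C_{\lfloor i/\mu\rfloor}=C'_{\lfloor i/\mu\rfloor}$ but $m_i\neq m'_i$, it simply aggregates the sibling values $\{D_w,y_{w,i\bmod\mu}\}$ along the folding path into two folding transcripts $\rho_1,\rho_2$ that both pass verification against the same sub-commitment yet claim different evaluations, which directly contradicts the soundness of the folding scheme for ${\sf PC.HyperEval}$ guaranteed by \cite{GWC19,BDF+21} --- a black-box soundness reduction with no rewinding and no challenge-matrix inversion. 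You instead set out to prove the stronger statement that {\em OpenAll} is an argument of knowledge, which forces you to recover each $f_j$ from the folded $g^*$ via $\mu$ rewound executions with fresh challenges and inversion of the challenge matrix; this is precisely the step you flag as delicate, and it is not needed for position binding (it is also where one must still invoke a Schwartz--Zippel-type argument as in Lemma \ref{lem:equalityB} to link the folded claim back to the individual claim, so the folding soundness lemma does not disappear --- you pay for it twice). For the non-interactive lifting, the paper proves soundness of the {\em interactive} protocol, observes that both reductions tolerate reuse of verifier randomness (state restoration soundness in the sense of \cite{BCS16}), and then invokes the result of \cite{ZXH+22} that state-restoration-sound interactive protocols remain sound under their Fiat--Shamir variant; you work directly in the non-interactive ROM with rewinding and oracle reinitialization per Definition \ref{df:non-interactive}. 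Both routes can be made rigorous, but the paper's avoids analyzing rewinding against a Fiat--Shamir prover altogether. In short: your approach buys a knowledge-soundness statement for {\em OpenAll} (more than the theorem asks for) at the cost of the heaviest machinery, while the paper's buys a shorter, rewinding-free proof of exactly the claimed property.
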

\begin{proof}
	We first prove that the interactive version of FlexProofs satisfies the position binding property.
	Concretely, we show that for any security parameter $\lambda$, any integer $N>0$, and any PPT adversary $\mathcal{A}_{OpenAll}$,
	$$
	\Pr
	\begin{bmatrix}
	{\sf pp} \leftarrow {\sf VC.Setup}(1^\lambda, 1^{N}):
	(\langle {\cal A}_{\it OpenAll}, {\cal V}_i\rangle({\sf pp}, C, i, m_i) =1) \wedge\\
	(\langle {\cal A}_{\it OpenAll}, {\cal V}_i\rangle({\sf pp}, C, i, m_i^\prime) =1) \wedge
	(m_i\neq m_i^\prime)
	\end{bmatrix}\leq {\sf negl}(\lambda).
	$$
	Suppose there is a PPT adversary $\mathcal{A}_{OpenAll}$ that breaks the position binding property of the interactive version with a non-negligible probability $\epsilon$. 
	To distinguish the messages, denote all messages and proofs during the second acceptance (for $m_i'$) with primes.
	
	Let $k = \lfloor i / (\mu b) \rfloor$ and $K=[kb, (k+1)b)$.
	Then either
	\begin{align}
		&\exists a\in K:C_a\neq C_a^\prime,\label{eq:case1} \text{ or }\\
		&\forall a\in K:C_a=C_a^\prime.\label{eq:case2}
	\end{align}
	For the remaining part of the proof, we discuss two cases. 
	In the first case, $\mathcal{A}_{OpenAll}$ breaks the position binding property of the interactive version with two accepting proof transcripts that satisfies \eqref{eq:case1}.  
	In the second case, $\mathcal{A}$ breaks the position binding property of the interactive version with two accepting proof transcripts that satisfies \eqref{eq:case2}. 
	For $\ell\in\{1,2\}$, let $\epsilon_\ell$ be   the probability that case $\ell$ occurs. Then 
	$\epsilon=\epsilon_1+\epsilon_2$.	
	We will show that $\epsilon$ is negligible by showing that both $\epsilon_1$ and $\epsilon_2$ are negligible.
	
	\vspace{2mm}
	\noindent
	{\bf $\epsilon_1$ is negligible.}
	If $\epsilon_1$ is non-negligible, we 
	construct an adversary $\mathcal{B}_{FC}$ that breaks the function binding of the FC scheme with probability at least $\epsilon_1$ and thus give a contradiction. 
	Given the ${\sf pp}_{FC}$ in Eq. \eqref{eq:vcpp}, 
	$\mathcal{B}_{FC}$ generates ${\sf pp}_{PC}$ as per Eq. \eqref{eq:vcpp}, and invokes $\mathcal{A}_{OpenAll}$ with ${\sf pp}=\{{\sf pp}_{FC}, {\sf pp}_{PC}\} $.
	Upon receiving $\mathcal{A}_{OpenAll}$’s two accepting proof transcripts that satisfies \eqref{eq:case1}, $\mathcal{B}_{FC}$ outputs
	$$
	(C, \{{\bf u}_a, C_a, C_a^\prime\}_{a\in K}, \pi_{{\bf C}_{k}}, \pi_{{\bf C}_{k}}^\prime).
	$$ 
	Note that these values satisfies the following properties:
	\begin{enumerate}
		\item[(1)]
		${\sf FC.BVerify}({C}, \{{\bf u}_a\}_{a\in K}, \{C_a\}_{a\in K}, \pi_{{\bf C}_{k}})=1$; (by the transcript for $m_i$)
		\item[(2)]
		${\sf FC.BVerify}({C}, \{{\bf u}_a\}_{a\in K}, \{C_a^\prime\}_{a\in K}, \pi_{{\bf C}_{k}}^\prime)=1$; (by the transcript for $m_i^\prime$)
		\item[(3)]
		$\exists a\in K:C_a\neq C_a^\prime$. (by Eq. \eqref{eq:case1})
	\end{enumerate}
	Therefore, $\mathcal{B}_{FC}$ breaks the function binding property of the FC scheme with probability $\geq \epsilon_1$, which gives a contradiction. 
	
	\vspace{2mm}
	\noindent
	{\bf $\epsilon_2$ is negligible.}
	The soundness of the folding scheme for ${\sf PC.HyperEval}$ is guaranteed by \cite{GWC19,BDF+21}.
	If $\epsilon_2$ is non-negligible, we 
	construct an adversary $\mathcal{B}_{FH}$ that breaks the soundness of the folding scheme with probability at least $\epsilon_2$ and thus give a contradiction. 
	
	For convenience, let $j={\lfloor i/\mu \rfloor}$ and $[i]_\mu=i \bmod \mu$.
	First, recall that the accepting proof transcript contains the following: (1) a random value $r_j$, (2) $\{D_w, y_{w, [i]_\mu}\}_{w\in {\sf sib}(j)}$, where ${\sf sib}(j)$ denotes the set of sibling nodes along the path from the root to the $j$-th leaf node in the folding tree and (3) an opening proof $\pi^*_{[i]_\mu}$ of the final folded polynomial.
	
	Given the ${\sf pp}_{PC}$ in Eq. \eqref{eq:vcpp}, 
	$\mathcal{B}_{FH}$ generates ${\sf pp}_{FC}$ as per Eq. \eqref{eq:vcpp}, and invokes $\mathcal{A}_{OpenAll}$ with ${\sf pp}=\{{\sf pp}_{FC}, {\sf pp}_{PC}\}$.
	Upon receiving $\mathcal{A}_{OpenAll}$’s two accepting proof transcripts that satisfies \eqref{eq:case1}, $\mathcal{B}_{FH}$
	computes $$D_{c}=\sum_{w\in {\sf sib}(j)} D_w, ~y_{[i]_\mu}=\sum_{w\in {\sf sib}(j)} y_{w, [i]_\mu}, ~D_{c}^\prime=\sum_{w\in {\sf sib}(j)} D_w^\prime, ~y_{[i]_\mu}^\prime=\sum_{w\in {\sf sib}(j)} y_{w, [i]_\mu}^\prime.$$
	Then it outputs two folding transcripts
	$$
	\rho_1=(D_c, C_j, [i]_\mu, y_{[i]_\mu}, m_i, r_j, \pi^*_{[i]_\mu}), 
	\rho_2=(D_c^\prime, C_j^\prime, [i]_\mu, y_{[i]_\mu}^\prime, m_i^\prime, r_j^\prime, \pi^{*\prime}_{[i]_\mu}).
	$$
	Note that these values satisfies the following properties:
	\begin{enumerate}
		\item[(1)]
		$\rho_1$ passes the verification of the folding scheme; (by the transcript for $m_i$)
		\item[(2)]
		$\rho_2$ passes the verification of the folding scheme; (by the transcript for $m_i^\prime$)
		\item[(3)]
		$C_j=C_j^\prime$; (by Eq. \eqref{eq:case2}) 
		\item[(4)]
		$m_i\neq m_i^\prime$. 
	\end{enumerate}
	Therefore, $\mathcal{B}_{FH}$ breaks the soundness of the folding scheme for ${\sf PC.HyperEval}$	with probability $\geq \epsilon_2$, which gives a contradiction. 	
	
	We have shown that the interactive version of FlexProofs satisfies soundness, captured by the {\em position binding} property. 
	Next, we show that the interactive version of FlexProofs satisfies a stronger notion of soundness called {\em state restoration soundness} \cite{BCS16}.
	To prove state restoration soundness, we consider adversaries that are allowed to reset or rewind the verifier, potentially reusing the verifier's randomness or internal state. We note that the above soundness proof is robust to such behaviors. In particular, the reductions used in both cases (function binding and folding soundness) remain valid even if the verifier's challenge randomness is reused across different transcripts. Hence, the soundness proof already implies that the interactive protocol satisfies state restoration soundness.
	
	The work \cite{ZXH+22} formally proves that 
	as long as an interactive proof protocol is secure against
	state restoration attacks, then the non-interactive	protocol by applying their Fiat-Shamir transformation is sound in the random oracle model.
	Therefore, we conclude that our non-interactive scheme satisfies soundness (i.e., position binding) in the random oracle model.
	\qed
	%Finally, because the Fiat-Shamir transformation we use is identical to \cite{ZXH+22}, we can apply their techniques to prove the soundness of the non-interactive variant of our VC scheme.
\end{proof}
\end{document}